\newcommand*\mnote[3][0pt]{%
  \if l#2\reversemarginpar\def\pointer{\filledmedtriangleright}%
    \def\stackalignment{r}\fi%
  \if r#2\normalmarginpar\def\pointer{\filledmedtriangleleft}%
    \def\stackalignment{l}\fi%
  \marginpar{%
    \topinset{%
      \scalebox{1.5}{\textcolor{blue}{$\pointer$}}}{%
      \belowbaseline[-1.5\baselineskip-#1]{%
        \stackengine%
          {-5pt}%
          {\fcolorbox{blue}{white}{\parbox{1.8cm}%
            {\vspace{3pt}\raggedright#3}}}%
          {~\colorbox{white}{\sffamily \textcolor{red}{Note}}}%
          {O}%
          {l}%
          {F}%
          {F}%
          {S}%
        }%
      }{%
      3ex+#1}{%
      -2ex}%
  }%
}
\definecolor{Darkgreen}{rgb}{0,0.4,0}
\definecolor{wine-stain}{rgb}{0.5,0,0}
\def\R{\mathbb{R}}
\def\eps{\varepsilon}
\def\B{{\mathcal B}}
\def\E{{\mathbb E}}
\def\C{{\mathcal C}}
\def\X{{\mathcal X}}
\def\Y{{\mathcal Y}}
\def\S{{\mathcal S}}
\def\Z{{\mathcal Z}}
\def\U{{\mathcal U}}
\def\V{{\mathcal V}}
\def\sBer{{\mathsf{Bernoulli}}}
\def\sP  {\mathsf{P}_{\eps}^{\mathsf{e}}}
\def\sM{{\mathsf {sENSR}}}
\def\sW{{\mathsf {wENSR}}}
\def\sG{{\mathsf G}}
\def \var {{\mathsf {var}   }}
\def \mmse {{\mathsf {mmse}   }}
\newcounter{example}
\newenvironment{example}[1][]{\refstepcounter{example}\par\medskip
   \noindent \textit{Example~\theexample. #1} \rmfamily}{\medskip}
\newtheorem{definition}{Definition}
\newtheorem{theorem}{Theorem}
\newtheorem{corollary}{Corollary}
\newtheorem*{corollary-non}{Corollary}
\newtheorem{proposition}{Proposition}
\newtheorem{lemma}{Lemma}
\theoremstyle{remark}
\newtheorem{remark}{Remark}
\newcommand{\markov}{\mathrel\multimap\joinrel\mathrel-%
\mspace{-9mu}\joinrel\mathrel-}
\tikzstyle{RectObject}=[rectangle,fill=white,draw,line width=0.2mm]
\tikzstyle{line}=[draw]
\tikzstyle{arrow}=[draw, -latex]
\begin{document}

\title{\vspace{5.5mm}{Privacy-Aware MMSE Estimation}}
%\author{\IEEEauthorblockN{Shahab Asoodeh}
\author{\IEEEauthorblockN{Shahab Asoodeh, Fady Alajaji, and Tam\'{a}s Linder}
    \IEEEauthorblockA{\normalsize{Department of Mathematics and Statistics, Queen's University}
    %\\ asoodehshahab@mast.queensu.ca  }}
    \\\{asoodehshahab, fady, linder\}@mast.queensu.ca}}
\restoregeometry
\maketitle

\begin{abstract}
We investigate the problem of the predictability of random variable $Y$ under a privacy constraint dictated by random variable $X$, correlated with $Y$, where both predictability and privacy are assessed in terms of the minimum mean-squared error (MMSE). Given that $X$ and $Y$ are connected via a binary-input symmetric-output (BISO) channel, we derive the \emph{optimal} random mapping $P_{Z|Y}$ such that the MMSE of $Y$ given $Z$ is minimized while the MMSE of $X$ given $Z$ is greater than $(1-\eps)\var(X)$ for a given $\eps\geq 0$. We also consider the case where $(X,Y)$ are continuous and $P_{Z|Y}$ is restricted to be an additive noise channel.
\end{abstract}
\begin{IEEEkeywords}
Data privacy, equivocation, rate-privacy function, information theory, MMSE and additive channels, mutual information, maximal correlation.
\end{IEEEkeywords}
%\break
%{\hypersetup{linkbordercolor=black}
%\tableofcontents}

\section{Introduction and Preliminaries}
Consider two communicating agents Alice and Bob. Alice observes a random variable $Y$ and wants to reveal it to Bob in order to receive a payoff. On the other hand, nature chooses $X$, dependent on $Y$ via a fixed channel $P_{X|Y}$. Alice wishes to disclose $Y$ as accurately as possible, but in such a way that $X$ is kept almost private from Bob. For instance, $Y$ may represent the information that a social network (Alice) obtains from its users and $X$ may represent political preferences of the users. Alice wants to disclose $Y$ as accurately as possible to an advertising company and, simultaneously, wishes to protect the privacy of its users.
Given a fixed joint distribution $P_{XY}$, Alice, hence, needs to choose a random mapping $P_{Z|Y}$, the so-called \emph{privacy filter}, to release a new random variable $Z$, called the \emph{displayed data}, such that $X$ and $Z$ satisfy a privacy constraint and $Z$ maximizes a utility function (corresponding to the predictability of $Y$).

This problem has been addressed from an information-theoretic viewpoint in \cite{yamamotoequivocationdistortion,Lalitha_Forensics,Asoode_submitted,Asoodeh_Allerton,Asoodeh_CWIT,Calmon_fundamental-Limit,t_closeness,Fawaz_Makhdoumi,Funnel} where both utility and privacy are measured in terms of information-theoretic quantities.
%Yamamoto considered an information-theoretic setting in which privacy is measured via the conditional entropy of the private data given the displayed data or the \emph{equivocation} $H(X|Z)$, and defined the rate-distortion-equivocation function as the corresponding tradeoff between a distortion-based utility and privacy. Sankar et al.\ \cite{Lalitha_Forensics} studied a similar problem in the presence of side information. Calmon et al.\ \cite{Calmon_bounds_Inference} considered the case where $X=Y$ and proposed a novel privacy measure (which includes maximal correlation and chi-square correlation) between $X$ and $Z$ and studied the optimal privacy filter which minimizes the error probability $\Pr(\hat{X}(Z)\neq X)$ for any estimator $\hat{X}:Z\to X$. Calmon and Fawaz \cite{Calmon_privacy_Aganist} and Monedero et al.\ \cite{t_closeness} also independently used distortion and mutual information for utility and privacy, respectively, to define a privacy-distortion function which resembles the classical rate-distortion function. The problem where both utility and privacy are measured in terms of mutual information is treated in \cite{Funnel, Asoodeh_Allerton, Calmon_fundamental-Limit}.
In particular, in \cite{Asoodeh_Allerton} \emph{non-trivial perfect privacy} for discrete $X$ and $Y$ where $Z$ is
required to be statistically independent of $X$ and dependent on $Y$, is studied. It is shown that non-trivial perfect
privacy is possible if and only if  $X$ is \emph{weakly independent} of $Y$, that is, if the set of
vectors $\{P_{X|Y}(\cdot):y\in \Y\}$ is linearly dependent. Calmon et al.\ \cite{Calmon_fundamental-Limit} showed that $X$ is weakly independent of $Y$ if and only if the smallest singular value of the conditional expectation operator $f\mapsto \E[f(X)|Y]$ is zero and hence obtained an equivalent necessary and sufficient condition of non-trivial perfect privacy.

 In this paper, we take an estimation-theoretic approach and define both the privacy and utility functions in terms of the minimum mean-squared error (MMSE). For a given pair of random variables $(U,V)$, the MMSE of estimating $U$ given $V$ is
 \begin{eqnarray*}
% \nonumber to remove numbering (before each equation)
  \mmse(U|V) &:=& \inf_{g\in \B(\R)}\E[(U-g(V))^2]\\
  &=&\E[\left(U-\E[U|V]\right)^2]=\E[\var(U|V)],
\end{eqnarray*}
 where $\B(\R)$ denotes the collection of all Borel measurable\footnote{As pointed out in \cite{Counterexample_WISE}, we need to restrict the minimization to the collection of Borel measurable estimators $g$. It is possible to construct a nonmeasurable transformation $\hat{g}$ yielding a random variable $\hat{g}(V)$ which is equal to $U$ pointwise but $\mmse(U|V)=\var(U)>0$.} functions on the real line and $\var(\cdot|\cdot)$ denotes the conditional variance. The privacy filter $P_{Z|Y}$ is said to satisfy the $\eps$-\emph{strong estimation privacy} condition if $\mmse(f(X)|Y)\geq (1-\eps)\var(f(X))$ for any Borel function\footnote{This is reminiscent of \emph{semantic security} \cite{Goldwasser1984270} in the cryptography community. An encryption mechanism is said to be semantically secure if the adversary's advantage for correctly guessing \emph{any function} of the privata data given an observation of the mechanism's output (i.e., the ciphertext) is required to be negligible.} $f$ of $X$ and some $\eps\geq 0$ and similarly, it is said to satisfy the $\eps$-\emph{weak estimation privacy} condition if $\mmse(X|Y)\geq (1-\eps)\var(X)$. The parameter $\eps$ determines the level of desired privacy; in particular, $\eps=0$ corresponds to perfect privacy. We propose to use the estimation noise to signal ratio (ENSR), defined by $\frac{\mmse(Y|Z)}{\var(Y)}$, as the loss function associated with $Y$ and $Z$. The goal is to choose $P_{Z|Y}$ which satisfies the strong (resp., weak) estimation privacy condition and \emph{minimizes} the ENSR (or equivalently maximizes $\frac{\var(Y)}{\mmse(Y|Z)}$ as the utility function), which ensures the best predictability of $Y$ given a privacy-preserving $Z$. The function $\sM_\eps(X;Y)$  (resp., $\sW_\eps(X;Y)$) is introduced as this minimum to quantify the above goal.

  To evaluate $\sM_\eps(X;Y)$, we first show that the $\eps$-strong estimation privacy condition is equivalent to $\rho_m^2(X;Y)\leq \eps$ where $\rho_m$ is the maximal correlation.  We then show that $\sM_\eps(X;Y)$ and $\sW_\eps(X;Y)$ admit closed-form expressions when $P_{X|Y}$ is a binary-input and symmetric-output (BISO) channel.  Moreover, when $X$ is discrete, we develop a bound characterizing the privacy-constrained error probability, $\Pr(\hat{Y}(Z)\neq Y)$, for all estimators $\hat{Y}(Z)$ given a privacy-preserving $Z$, thus generalizing the results of \cite{Calmon_bounds_Inference}. In particular, we show that the fundamental bound on privacy-constrained error probability decreases \emph{linearly} as $\eps$ increases, analogously to \cite[Corollaries 3,5]{Calmon_bounds_Inference}.

We also study $\sM_\eps(X^n; Y^n)$ when $n$ i.i.d. copies $(X^n, Y^n)$ of $(X,Y)$ are available. It is intuitively clear from the Slepian-Wolf theorem that non-trivial perfect privacy is always possible for $(X^n,Y^n)$ with
sufficiently large $n$ irrespective of the perfect privacy associated with $(X,Y)$. %This is because one can generate $J:=f_n(Y^n)$ at rate $H(Y|X)$ which is asymptotically independent of $X^n$ (cf.\ \cite[Lemma 1]{Tandon_Quantize_bin}).
This observation is formalized by Calmon et al.\ \cite{Calmon_fundamental-Limit} by showing that, unless $X$ is a deterministic function of $Y$, the smallest singular value of the operator $f(X^n)\mapsto \E[f(X^n|Y^n)]$ converges to zero as $n\to \infty$, and hence non-trivial perfect privacy is possible for sufficiently large $n$. However, we demonstrate that if the class of privacy filters is constrained to be memoryless, then the situation drastically changes and $\sM_\eps(X^n; Y^n)$ remains the same for any $n$. This is reminiscent of the tensorization
property for the maximal correlation proved in \cite{Witsenhausen:dependent}.

In addition, $\sM_\eps(X; Y)$ is considered for the case where $(X,Y)$ has a joint probability density function by studying the problem where the displayed data $Z$ is obtained by passing $Y$ through an additive-noise channel. In
this case, we show that for a Gaussian noise process, jointly Gaussian $(X_{\sG},Y_{\sG})$ is
the worst case (i.e., has the largest ENSR). We also show that if $Y_{\sG}$ is Gaussian then the ENSR of $(X,Y_{\sG})$ is very close to the Gaussian ENSR if the maximal correlation between $X$ and $Y_{\sG}$ is close to the
correlation coefficient between $X$ and $Y_{\sG}$. It is important to note that maximal correlation is weakly lower semi-continuous, and hence the fact that $\rho_m^2(X; Y_{\sG})$ is close to $\rho^2(X; Y_{\sG})$ does not necessary mean that $X$ is Gaussian.

The rest of this paper is organized as follows. In Section II, we formally formulate the problem in terms of the strong and weak estimation privacy conditions and obtain some equivalent formulations. In Section III, we focus on discrete $(X, Y)$ and derive some properties for the corresponding utility-privacy functions and then calculate $\sM_\eps(X;Y)$ and $\sW_\eps(X;Y)$ for binary $Y$. Section IV is devoted to the same problem for continuous $(X,Y)$ when the privacy filter is an additive-noise channel.

\section{Strong estimation privacy guarantee}
Consider the scenario where Alice observes $Y$ which is correlated with a private random variable $X$, drawn from a given joint distribution $P_{XY}$, and wishes to transmit the random variable $Z$ to Bob to receive some utility from him. Her goal is to maximize the utility while making sure that Bob cannot efficiently estimate any non-trivial function of $X$ given $Z$. To formalize this privacy guarantee, we give the following definition. In what follows random variables $X$, $Y$, and $Z$ have alphabets $\X$, $\Y$, and $\Z$, respectively, which are either finite subsets of $\R$ or they are all equal to $\R$.
\begin{definition}\label{Def:strong_estimation_privacy}
Given a joint distribution $P_{XY}$ and $\eps\geq0$, $Z$ is said to satisfy \emph{$\eps$-strong estimation privacy}, denoted as $Z\in \Gamma_{\eps}(P_{XY})$, if there exists a random mapping (channel) $P_{Z|Y}$ that induces a joint distribution $P_X\times P_{Z|X}$ on $\X\times \Z$, via the Markov condition $X\markov Y\markov Z$, satisfying
\begin{equation}\label{Eq:strong_estimation_privacy}
  \mmse(f(X)|Z)\geq (1-\eps)\var(f(X)),
\end{equation}
for any non-degenerate Borel functions $f$ on $\X$. Similarly, $Z$ is said to satisfy  \emph{$\eps$-weak estimation privacy}, denoted as $Z\in \partial\Gamma_{\eps}(P_{XY})$, if \eqref{Eq:strong_estimation_privacy} is satisfied only for the identity function $f(x)=x$.
\end{definition}
In the sequel, we drop in the notation the dependence of $\Gamma_{\eps}(P_{XY})$ on $P_{XY}$ and simply write $\Gamma_\eps$.

Suppose the utility Alice receives from Bob is $\frac{\var(Y)}{\mmse(Y|Z)}$. The utility is maximized (and is equal to $\infty$) when $Z=Y$ with probability one and is minimized (and is equal to one) when $Z$ is independent of $Y$. In order to quantify the tradeoff between privacy guarantee (introduced above) and the utility, we propose the following function, which we call the strong privacy-aware \emph{estimation noise to signal ratio} (ENSR):
\begin{equation}\label{Def:M_Eps}
  \sM_{\eps}(X;Y):=\inf_{Z\in \Gamma_{\eps}}\frac{\mmse(Y|Z)}{\var(Y)}.
\end{equation}
Similarly, we can use weak estimation privacy to define the weak privacy-aware ENSR as follows:
\begin{equation}\label{Def:W_Eps}
\sW_{\eps}(X;Y):=\inf_{Z\in \partial\Gamma_{\eps}}\frac{\mmse(Y|Z)}{\var(Y)}.
\end{equation}
\begin{remark}\label{remark_Correlation_Ratio}
  The quantity $\displaystyle \frac{\mmse(Y|Z)}{\var(Y)}$ is intimately related to the \emph{correlation ratio}, introduced by R\'{e}nyi \cite{Renyi-dependence-measure}. The correlation ratio of $Y$ on $Z$, denoted by $\eta_Z(Y)$, is defined as $$\eta^2_Z(Y):=\frac{\var(\E[Y|Z])}{\var(Y)},$$ which can be shown to be equal to $\sup_{g} \rho^2(Y; g(Z))$, where $\rho$ is the standard correlation coefficient. It is clear from the law of total variance that $$ \frac{\mmse(Y|Z)}{\var(Y)}=1-\eta^2_Z(Y).$$
\end{remark}
In the sequel, we obtain an equivalent characterization for the random mapping $P_{Z|X}$ which generate $Z\in \Gamma_{\eps}$. To this goal, we need the following definition.
\begin{definition}[\hspace{-0.007cm}\cite{Sarmanov,Renyi-dependence-measure}]\label{Definition-Maximal_corr}
Given  random  variables $U$ and $V$ taking values over arbitrary alphabets $\U$ and $\V$, respectively, the  \emph{maximal  correlation} $\rho_m(U;V)$ is defined as
\begin{eqnarray*}
% \nonumber to remove numbering (before each equation)
  \rho_m^2(U;V)&:=& \sup_{f, g} \rho^2(f(U), g(V))\\
   &=& \sup_{(f(U),g(V))\in \mathcal{S}^0}\frac{\E^2[f(U)g(V)]}{\var(f(U))\var(g(V))},
\end{eqnarray*}
where $\mathcal{S}^0$ is the collection of all pairs of real-valued measurable functions $f$ and $g$ of $U$ and $V$, respectively, such that $\E[f(U)]=\E[g(V)]=0$ and $0<~\var(f(U)), \var(g(V))<~\infty$.
\end{definition}
It can be shown that $0\leq\rho_m(U; V)\leq 1$ where the lower bound is achieved if and only if $U$ and $V$ are independent and the upper bound is achieved if and only if there exists a pair of functions $(f, g)\in \S^0$ such that $f(U)=g(V)$ almost surely. R\'{e}nyi \cite{Renyi-dependence-measure} derived an equivalent characterization of maximal correlation as
\begin{equation}\label{Maximal_correlation_Equivalent}
  \rho^2_m(U;V)=\sup_{f\in\S^0_{\U}}\frac{\E\left[\E^2[f(U)|V]\right]}{\var(f(U))},
\end{equation}
where $\mathcal{S}_{\U}^0$ is the collection of all real-valued measurable functions $f$ of $U$ such that $\E[f(U)]=0$ and $0<\var(f(U))<\infty$.
\begin{theorem}\label{Theorem_equivakebt_strong_MC}
     For a given $P_{XY}$, $Z\in \Gamma_{\eps}$ if and only if there exists $P_{Z|Y}$ which induces $P_{Z|X}$ via $X\markov Y\markov Z$ satisfying $\rho_m^2(X;Z)\leq \eps$ for any $\eps\geq 0$.
\end{theorem}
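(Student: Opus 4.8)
The plan is to translate the strong estimation privacy condition \eqref{Eq:strong_estimation_privacy} directly into R\'enyi's variational characterization \eqref{Maximal_correlation_Equivalent} of maximal correlation by means of the law of total variance. Since the Markov requirement $X\markov Y\markov Z$ together with the existence of $P_{Z|Y}$ inducing $P_{Z|X}$ appears verbatim on both sides of the asserted equivalence, it suffices to show that, for a fixed such $Z$, the inequality $\mmse(f(X)|Z)\ge (1-\eps)\var(f(X))$ holds for every non-degenerate Borel function $f$ on $\X$ if and only if $\rho_m^2(X;Z)\le\eps$.

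First I would record the elementary identity coming from the law of total variance: for any Borel $f$ with $0<\var(f(X))<\infty$,
\[
\mmse(f(X)|Z)=\E[\var(f(X)|Z)]=\var(f(X))-\var(\E[f(X)|Z]).
\]
Hence $\mmse(f(X)|Z)\ge (1-\eps)\var(f(X))$ is equivalent to $\var(\E[f(X)|Z])\le\eps\,\var(f(X))$. Since neither $\var(f(X))$ nor $\var(\E[f(X)|Z])$ changes when $f$ is replaced by $f-\E[f(X)]$, there is no loss in assuming $\E[f(X)]=0$, in which case $\E[\E[f(X)|Z]]=0$ and therefore $\var(\E[f(X)|Z])=\E[\E^2[f(X)|Z]]$. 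Thus, for such an $f$, the privacy condition reads $\E[\E^2[f(X)|Z]]\le \eps\,\var(f(X))$.

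For the ``only if'' direction, assuming $Z\in\Gamma_{\eps}$, I would apply the last inequality to every $f\in\S^0_{\X}$, divide by $\var(f(X))>0$, and take the supremum over $\S^0_{\X}$; by \eqref{Maximal_correlation_Equivalent} this supremum is exactly $\rho_m^2(X;Z)$, which yields $\rho_m^2(X;Z)\le\eps$. For the ``if'' direction, given $\rho_m^2(X;Z)\le\eps$, I would take an arbitrary non-degenerate Borel $f$, center it to $\tilde f:=f-\E[f(X)]\in\S^0_{\X}$, and use \eqref{Maximal_correlation_Equivalent} to get $\E[\E^2[\tilde f(X)|Z]]\le\eps\,\var(\tilde f(X))$; running the total-variance identity backwards then gives $\mmse(f(X)|Z)=\mmse(\tilde f(X)|Z)\ge(1-\eps)\var(f(X))$, so $Z\in\Gamma_{\eps}$.

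No serious obstacle is anticipated; the only points deserving care are the reduction to zero-mean functions (so that $\var(\E[f(X)|Z])=\E[\E^2[f(X)|Z]]$, matching the form appearing in \eqref{Maximal_correlation_Equivalent}) and the implicit restriction to $f$ with $0<\var(f(X))<\infty$ in the case $\X=\R$, which is precisely what ``non-degenerate'' encodes and what guarantees that all the ratios above are well defined.
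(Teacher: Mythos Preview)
Your proposal is correct and follows essentially the same route as the paper: both reduce to zero-mean functions, invoke the law of total variance to rewrite $\mmse(f(X)|Z)/\var(f(X))$ as $1-\eta^2_Z(f(X))$, and then identify the supremum with $\rho_m^2(X;Z)$ via R\'enyi's characterization \eqref{Maximal_correlation_Equivalent}. The only cosmetic difference is that the paper phrases the ``only if'' direction through an $\eps$-$\delta$ argument on the infimum, whereas you take the supremum directly; your version is marginally cleaner but not a different idea.
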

\begin{proof}
Consider a function $f:\X\to \R$. We can define $\tilde{f}(X):=f(X)-\E[f(X)]$ and since $\mmse(\tilde{f}(X)|Z)=\mmse(f(X)|Z)$ and $\var(\tilde{f}(X))=\var(f(X))$, without loss of generality, we can assume that $\E[f(X)]=0$. We can then write
\begin{equation}\label{Eq:Proof_lemma1}
\eta^2_Z(f(X))=\frac{\E[\E^2[f(X)|Z]]}{\var(f(X))}  %\frac{\mmse(f(X)|Z)}{\var(f(X))}=\frac{\E[f^2(X)]-\E[\E^2[f(X)|Z]]}{\var(f(X))}.
\end{equation}
%\begin{eqnarray}
%% \nonumber to remove numbering (before each equation)
%  \frac{\mmse(f(X)|Z)}{\var{f(X)}} &=& \frac{\E[(f(X)-\E[f(X)|Z])^2]}{\E[f^2(X)]}\nonumber\\
%   &\stackrel{(a)}{=}& \frac{||f(X)||_2^2-||\E[f(X)|Z]||_2}{\E[f^2(X)]}, \label{Eq:Proof_lemma1}
%\end{eqnarray}
Thus we obtain
\begin{eqnarray}
% \nonumber to remove numbering (before each equation)
  \inf_{f\in\S_{\X}^0}\frac{\mmse(f(X)|Z)}{\var(f(X))} &=& 1-\sup_{f\in\S_{\X}^0}\eta^2_Z(f(X))\label{Eq:Proof_lemma1_3}\\
   &\stackrel{(a)}{=}& 1-\rho_m^2(X;Z),\label{Eq:Proof_lemma1_2}
\end{eqnarray}
where \eqref{Eq:Proof_lemma1_2} is due to \eqref{Maximal_correlation_Equivalent}.

If $\rho_m^2(X;Z)\leq \eps$, then it is clear from \eqref{Eq:Proof_lemma1_2} that
\begin{equation*}
   \mmse(f(X)|Z)\geq (1-\eps)\var(f(X))
\end{equation*}
% \nonumber to remove numbering (before each equation)
  %\mmse(f(X)|Z)\geq (1-\eps)\var(f(X)) %(1-\rho_m^2(X;Z))\var(f(X)) \\
   %&\geq&,
and hence \eqref{Eq:strong_estimation_privacy} is satisfied. Conversely, let $P_{XZ}$ satisfy the $\eps$-strong estimation privacy. Then for any $f$, \eqref{Eq:strong_estimation_privacy} is satisfied. Also, in view of \eqref{Eq:Proof_lemma1_3} and \eqref{Eq:Proof_lemma1_2} for arbitrary $\delta>0$, there exists $f\in \S^0_{\X}$ such that
$$1-\eps\leq \frac{\mmse(f(X)|Z)}{\var(f(X))}\leq 1-\rho_m^2(X;Z)+\delta,$$ and hence,
$$\rho_m^2(X;Z)\leq \eps+\delta,$$ which completes the proof.
\end{proof}
In light of Theorem~\ref{Theorem_equivakebt_strong_MC} and Remark~\ref{remark_Correlation_Ratio}, we can write $\sM_{\eps}(X;Z)$ and $\sW_{\eps}(X;Z)$ alternatively as
\begin{equation}\label{Eq:PAMMSE_equivalent}
  \sM_{\eps}(X;Y)=1-\sup_{P_{Z|Y}:\rho_m^2(X;Z)\leq \eps,\atop X\markov Y\markov Z} \eta^2_Z(Y),
\end{equation}
and
\begin{equation}\label{Eq:PAMMSE_equivalent_2}
  \sM_{\eps}(X;Y)=1-\sup_{P_{Z|Y}:\eta^2_Z(X)\leq \eps,\atop X\markov Y\markov Z} \eta^2_Z(Y),
\end{equation}
for any $\eps\geq 0$. We note that, using the Support Lemma \cite{csiszarbook}, one can show the set $\Gamma_{\eps}$ can be described only by considering $Z\in \Z$ with $|\Z|\leq |\Y|+1$ in case $\Y$ is finite. We also note that since both maximal correlation and correlation ratio satisfy the data processing inequality \cite{Asoode_submitted,Calmon_bounds_Inference,Ulukus_Data_processing}, i.e. $\rho_m^2(X; Z)\leq \eta_m^2(X;Y)$ and $\eta^2_Z(X)\leq \eta^2_Y(X)$ over $X\markov Y\markov Z$, we can restrict our attention to $0\leq\eps\leq \rho_m^2(X;Y)$ and $0\leq\eps\leq \eta_Y^2(X)$ in \eqref{Eq:PAMMSE_equivalent} and \eqref{Eq:PAMMSE_equivalent_2}, respectively.

\section{Characterization of  $\sM_{\eps}(X;Y)$ and $\sW_{\eps}(X;Y)$ For Discrete $X$ and $Y$}
We first derive some properties of $\sM_\eps(X;Y)$ and $\sW_\eps(X;Y)$ when both $X$ and $Y$ are discrete. For a given $P_{XY}$ and $0\leq\eps\leq\rho_m^2(X;Y)$, we have the following trivial bounds:
\begin{equation}\label{Bounds_UP_LB_Weak}
  0\leq\sW_{\eps}(X;Y)\leq \sM_{\eps}(X;Y)\leq 1-\eps,
\end{equation}
where the last inequality can be proved by noticing that $\sM_{\eps}(X;Y)\leq \sM_{\eps}(Y;Y)$ and
 \begin{eqnarray}
 % \nonumber to remove numbering (before each equation)
  \mmse(Y|Z)&=& \var(Y)(1-\eta^2_Z(Y))\nonumber\\ %\var(Y)-\var(\E[Y|Z])\nonumber\\
  %&=& \E[Y^2]-\E[(\E[Y|Z])^2]\nonumber\\
  %&=& \var(Y)(1-\rho^2(Y;\E[Y|Z])) \nonumber\\
  &\geq & \var(Y)(1-\rho_m^2(Y;Z))\label{Inequality_MMSE_rho}, \end{eqnarray}
where \eqref{Inequality_MMSE_rho} follows from the definition of maximal correlation. The lower bound $0\leq\sM_{\eps}(X;Y)$ in \eqref{Bounds_UP_LB_Weak}  is achieved if and only if $\rho_m^2(X;Y)=\eps$. This is because $\sM_{\eps}(X;Y)=0$ implies that there exists $Z\in \Gamma_{\eps}$ such that $X\markov Y\markov Z$ and $\mmse(Y|Z)=0$ and hence $Z=Y$ almost surely and thus $Y\in \Gamma_{\eps}$. On the other hand, when $\eps=0$, the upper bound $\sM_{0}(X;Y)\leq 1$ is tight if and only if all $Z\in \Gamma_{0}$ are independent of $Y$. Hence, from \cite[Lemma 6]{Asoode_submitted}, $\sM_0(X;Y)=1$ if and only if $X$ is not \emph{weakly independent} of $Y$. In particular, if $|\Y|>|\X|$, then $\sM_{0}(X;Y)<1$, and if $|\Y|=2$, then $\sM_0(X;Y)=1$.
%if and only if the set of vectors $\{P_{X|Y}(\cdot|y), y\in \Y\}$ are linearly independent.

The map $\eps\mapsto \sM_{\eps}(X;Y)$ is clearly non-increasing. The following lemma states that this map is indeed convex and thus strictly decreasing. As another consequence of this convexity, we obtain an upper bound on $\sM_\eps(X;Y)$ which strictly strengthens \eqref{Bounds_UP_LB_Weak}.

\begin{lemma}\label{lemma_Concavity_M_eps}
For any joint distribution $P_{XY}$, the maps $\eps~\mapsto~\sM_{\eps}(X;Y)$ and $\eps~\mapsto~\sW_{\eps}(X;Y)$ are convex.
\end{lemma}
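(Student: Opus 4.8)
The plan is to prove convexity by a time-sharing argument carried out directly on the privacy filters, using the original $\mmse$-based description of $\Gamma_\eps$ from Definition~\ref{Def:strong_estimation_privacy} rather than the maximal-correlation characterization of Theorem~\ref{Theorem_equivakebt_strong_MC}. The point of this choice is that both the privacy functional $g\mapsto\mmse(g(X)|Z)$ and the utility functional $\mmse(Y|Z)$ are \emph{affine} when one mixes two filters through an independent binary switch that is itself revealed in the output, whereas $\rho_m^2(X;Z)$ is not affine under such mixtures; thus the $\mmse$ formulation is the right handle here.

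Concretely, I would fix $\eps_0,\eps_1\ge0$ and $\lambda\in[0,1]$, set $\eps:=\lambda\eps_0+(1-\lambda)\eps_1$, and for a given $\delta>0$ choose filters $P_{Z_0|Y}$, $P_{Z_1|Y}$ with $Z_0\in\Gamma_{\eps_0}$, $Z_1\in\Gamma_{\eps_1}$, and $\mmse(Y|Z_i)\le\bigl(\sM_{\eps_i}(X;Y)+\delta\bigr)\var(Y)$ for $i=0,1$. Then I would introduce a switch $Q$, independent of $(X,Y)$, with $\Pr(Q=0)=\lambda$, let the combined filter apply $P_{Z_q|Y}$ when $Q=q$, and output $Z:=(Z_Q,Q)$ (relabeled into $\Z$ if scalar displayed data is insisted upon); note $X\markov Y\markov Z$ still holds. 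The key computation is that, since $Q$ is independent of $(X,Y)$ and is part of the observation, conditioning on $Q$ yields
\[
\mmse(g(X)|Z)=\lambda\,\mmse(g(X)|Z_0)+(1-\lambda)\,\mmse(g(X)|Z_1),
\]
and likewise with $Y$ in place of $g(X)$. Feeding $Z_i\in\Gamma_{\eps_i}$ into the first identity gives $\mmse(g(X)|Z)\ge(1-\eps)\var(g(X))$ for every non-degenerate $g$, hence $Z\in\Gamma_\eps$; then the second identity together with \eqref{Def:M_Eps} gives $\sM_\eps(X;Y)\le\lambda\bigl(\sM_{\eps_0}(X;Y)+\delta\bigr)+(1-\lambda)\bigl(\sM_{\eps_1}(X;Y)+\delta\bigr)$, and letting $\delta\downarrow0$ finishes the argument. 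The identical construction with $g$ restricted to the identity function and $\Gamma$ replaced by $\partial\Gamma$ handles $\sW_\eps$.

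I expect the only delicate point to be the affine decomposition of $\mmse(\cdot|Z)$ under the switch: one has to check that, conditioned on $\{Q=q\}$, the pair $(g(X),Z_q)$ retains its original joint law, so that the best estimator based on $(Z_Q,Q)$ is genuinely the per-branch optimal estimator, and that the measurability/relabeling of the enlarged alphabet of $Z$ costs nothing. Everything else is routine bookkeeping.
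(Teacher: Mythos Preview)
Your proposal is correct and follows essentially the same time-sharing argument as the paper: both mix two (near-)optimal filters through an independent binary switch and exploit the resulting affine decomposition $\mmse(f(X)|Z)=\lambda\,\mmse(f(X)|Z_0)+(1-\lambda)\,\mmse(f(X)|Z_1)$; the paper writes this as $\E[\E^2[f(X)|Z_\lambda]]=\lambda\,\E[\E^2[f(X)|Z_3]]+(1-\lambda)\,\E[\E^2[f(X)|Z_1]]$ and then passes to $\rho_m^2$ via Theorem~\ref{Theorem_equivakebt_strong_MC}, whereas you stay with Definition~\ref{Def:strong_estimation_privacy} directly. The remaining differences are cosmetic: you reveal the switch explicitly as part of $Z$ instead of encoding it through disjoint output alphabets, and you use $\delta$-optimal filters and let $\delta\downarrow0$ rather than assuming the infimum in \eqref{Def:M_Eps} is attained.
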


\begin{proof}
Here we give the complete proof for only $\sM_{\eps}(X;Y)$. The proof for $\sW_{\eps}(X;Y)$ is similar and hence is omitted.
%Note that $P_{Z|X}\mapsto \mmse(P_X, P_{Z|X})$ is concave \cite[Corollary 1]{Wu_Verdu} and $P_{Z|X}$ is linear in $P_{Z|Y}$ which implies that $P_{Z|Y}\mapsto \mmse(P_X, P_{Z|X})$ is concave as well and  hence, for any given $\eps\geq 0$, $\Gamma_{\eps}(P_X)$ is a convex set. We can therefore write
%$$\sM_{\eps}(X;Y)=\var(Y)-\inf\{\mmse(Y|Z)|~P_{Z|Y}\in\Gamma_{\eps}(P_X)\},$$
%which according to \cite[Theorem 5.3]{rock} implies that $\eps$
%
%since  This implies that
 For brevity, in this proof we write $\sM_\eps$ instead
of $\sM_\eps(X;Y)$.
 It suffices to show that for any $0\leq \eps_1<\eps_2<\eps_3\leq \rho_m^2(X;Y)$, we have
\begin{equation}\label{concavity_requirement}
    \frac{\sM_{\eps_3}-\sM_{\eps_1}}{\eps_3-\eps_1}\geq \frac{\sM_{\eps_2}-\sM_{\eps_1}}{\eps_2-\eps_1},
\end{equation}
which, in turn, is equivalent to
\begin{equation}\label{concavity_requirement_2}
\sM_{\eps_2}\leq\left(\frac{\eps_2-\eps_1}{\eps_3-\eps_1}\right)\sM_{\eps_3}+\left(\frac{\eps_3-\eps_2}{\eps_3-\eps_1}\right)\sM_{\eps_1}.\end{equation}
Let $P_{Z_1|Y}: Y\to Z_1$   and $P_{Z_3|Y}:Y\to Z_3$ be two optimal channels with $Z_1\in \Gamma_{\eps_1}$, $Z_3\in \Gamma_{\eps_3}$, and  with disjoint output alphabets $\Z_1$ and $\Z_3$, respectively.

We introduce an auxiliary binary random variable $U\sim~\sBer(\lambda)$, independent of $(X,Y)$, where $\lambda:=\frac{\eps_2-\eps_1}{\eps_3-\eps_1}$ and define the channel $P_{Z_{\lambda}|Y}$: We pick $P_{Z_3|Y}$ if $U=1$ and $P_{Z_1|Y}$ if $U=0$, and let $Z_{\lambda}$ be the output of this channel with output alphabet $\Z_1\cup \Z_3$. We then have
\begin{eqnarray}
%% \nonumber to remove numbering (before each equation)
\E[\E^2[f(X)|Z_{\lambda}]] &=&\E\left[\E[\E^2[f(X)|Z_{\lambda}]|U]\right]\nonumber\\
&=& \lambda \E[\E^2[f(X)|Z_3]]\nonumber\\
&&+(1-\lambda) \E[\E^2[f(X)|Z_1]],\label{proof_Concavity_gHat}
\end{eqnarray}
where the second equality holds since $U$ is independent of $X$. We can then use the alternative characterization of maximal correlation in \eqref{Maximal_correlation_Equivalent} to write
\begin{eqnarray*}
% \nonumber to remove numbering (before each equation)
  \rho^2_m(X;Z_{\lambda})&=&\sup_{f\in\S_{\X}^0}\frac{\E[\E^2[f(X)|Z_{\lambda}]]}{\E[f^2(X)]}  \\
   %&=& \sup_{f\in\S_{X}^0}\left[\gamma \E[\E^2[f(X)|Z_3]]+(1-\gamma) %\E[\E^2[f(X)|Z_1]]\right]\\
   &\leq& \lambda \rho^2_m(X;Z_3)+(1-\lambda) \rho^2_m(X;Z_1)\\&\leq& \lambda \eps_3+(1-\lambda)\eps_1=\eps_2,
\end{eqnarray*}
where the first inequality follows from \eqref{proof_Concavity_gHat}. Thus $Z_{\lambda}~\in~\Gamma_{\eps_2}$.
%\begin{eqnarray*}
%% \nonumber to remove numbering (before each equation)
%  \rho_m^2(X;Z_{\lambda}) &\stackrel{(a)}{=}& \rho_m^2(X;Z_{\lambda},U)\\
%  &=& \lambda \rho_m^2(X;Z_3)+(1-\lambda)\rho_m^2(X;Z_1), \\
%   &\leq& \eps_2,
%\end{eqnarray*}
%where $(a)$ follows from the Markovity.
%It therefore follows that  $Z_{\lambda}\in\Gamma_{\eps_2}(P_X)$.

On the other hand,  we have
\begin{eqnarray*}
% \nonumber to remove numbering (before each equation)
 \mmse(Y|Z_{\lambda})&=&\E[Y^2]-\E[\E^2[Y|Z_{\lambda}]] \\
   &=&\E[Y^2]-\E[\E[\E^2[Y|Z_{\lambda}|U]]]   \\
   %&=& \E[Y^2]-\lambda\E[\E^2[Y|Z_3]]-(1-\lambda)\E[\E^2[Y|Z_1]]\\
    &=& \lambda \mmse(Y|Z_3)+(1-\lambda) \mmse(Y|Z_1),
\end{eqnarray*}
and hence

\begin{eqnarray*}
\sM_{\eps_2}&\leq& \frac{\mmse(Y|Z_{\lambda})}{\var(Y)}\\
&=&\frac{\lambda \mmse(Y|Z_3)-(1-\lambda) \mmse(Y|Z_1)}{\var(Y)}\\
&=& \lambda \sM_{\eps_3}+(1-\lambda) \sM_{\eps_1}
\end{eqnarray*}
which, according to \eqref{concavity_requirement_2}, completes the proof.
\end{proof}
%Since $\Gamma_{\eps_1}(P)\subset \gamma_{\eps_2}(P)$ for $\eps_1\leq \eps_2$, it is clear that the map $\eps\mapsto \sM_{\eps}(X;Y)$ is increasing. Lemma~\ref{lemma_Concavity_M_eps} then implies that $\eps\mapsto \sM_{\eps}(X;Y)$ is indeed strictly increasing.
In light of the convexity of $\eps\mapsto \sM_{\eps}(X;Y)$ the following corollaries are immediate.

%\begin{corollary-non}\label{corollary_M_over_eps_Decreasing}
%For a given $P_{XY}$, the maps $\eps\mapsto \frac{\sM_{\eps}(X;Y)}{\eps}$ and $\eps\mapsto \frac{\sW_{\eps}(X;Y)}{\eps}$ are strictly decreasing on $(0, 1)$.
%\end{corollary-non}
%\begin{proof}
%Fix $\eps_1<\eps_2$. We can write
%$$\frac{\sM_{\eps_1}(X;Y)}{\eps_1}> \frac{\sM_{\eps_2}(X;Y)}{\eps_1}> \frac{\sM_{\eps_2}(X;Y)}{\eps_2},$$ where the first inequality is due to the fact that $\eps\mapsto \sM_{\eps}(X;Y)$ is strictly decreasing.
%%Since $\eps\mapsto \sM_{\eps}(X;Y)$ is concave, then we have
%%\begin{equation}\label{ineq:Convexity_implication}
%%  \frac{\text{d}}{\text{d}\eps}\sM_{\eps}=:\sM_{\eps}'(X;Y)\geq\frac{\sM_{\eps}(X;Y)}{\eps},
%%\end{equation}
%%which implies $\eps\sM_{\eps}'(X;Y)-\sM_{\eps}(X;Y)\geq0$ and hence $\left(\frac{\sM_{\eps}}{\eps}\right)'\geq0$.
%\end{proof}
\begin{corollary}\label{corollary_M_over_eps_Decreasing}
  For a given $P_{XY}$, the maps $\eps\mapsto \frac{1-\sM_{\eps}(X;Y)}{\eps}$ and $\eps\mapsto \frac{1-\sW_{\eps}(X;Y)}{\eps}$ are non-increasing over $(0, 1)$.
\end{corollary}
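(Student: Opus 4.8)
The plan is to derive the corollary from the convexity of $\eps\mapsto\sM_{\eps}(X;Y)$ proved in Lemma~\ref{lemma_Concavity_M_eps}, combined with the elementary bound $\sM_{0}(X;Y)\le 1$ already recorded in \eqref{Bounds_UP_LB_Weak}. The underlying fact is that any convex function $\phi$ on $[0,a]$ with $\phi(0)\le 1$ has the property that $\eps\mapsto(1-\phi(\eps))/\eps$ is non-increasing on $(0,a]$; I would simply verify this directly for $\phi=\sM_{\bullet}(X;Y)$ and for $\phi=\sW_{\bullet}(X;Y)$.

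Concretely, fix $0<\eps_{1}<\eps_{2}\le\rho_{m}^{2}(X;Y)$ and write $\eps_{1}=\tfrac{\eps_{1}}{\eps_{2}}\eps_{2}+\bigl(1-\tfrac{\eps_{1}}{\eps_{2}}\bigr)\cdot 0$ as a convex combination of $\eps_{2}$ and $0$. Convexity then gives $\sM_{\eps_{1}}(X;Y)\le\tfrac{\eps_{1}}{\eps_{2}}\sM_{\eps_{2}}(X;Y)+\bigl(1-\tfrac{\eps_{1}}{\eps_{2}}\bigr)\sM_{0}(X;Y)\le\tfrac{\eps_{1}}{\eps_{2}}\sM_{\eps_{2}}(X;Y)+1-\tfrac{\eps_{1}}{\eps_{2}}$, where the last step uses $\sM_{0}(X;Y)\le1$. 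Rearranging and dividing by $\eps_{1}>0$ yields $\dfrac{1-\sM_{\eps_{1}}(X;Y)}{\eps_{1}}\ge\dfrac{1-\sM_{\eps_{2}}(X;Y)}{\eps_{2}}$, which is the asserted monotonicity on $(0,\rho_{m}^{2}(X;Y)]$. To cover the remainder of $(0,1)$, note that for $\eps\ge\rho_{m}^{2}(X;Y)$ one has $Y\in\Gamma_{\eps}$ (by Theorem~\ref{Theorem_equivakebt_strong_MC}, since $\rho_{m}^{2}(X;Y)\le\eps$), hence $\sM_{\eps}(X;Y)=0$ and the ratio equals $1/\eps$, which is non-increasing; moreover $\sM_{\rho_{m}^{2}(X;Y)}(X;Y)=0$ as well (the lower bound in \eqref{Bounds_UP_LB_Weak} being attained exactly at $\eps=\rho_{m}^{2}(X;Y)$), so the two pieces agree at the junction and the ratio is non-increasing on all of $(0,1)$. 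The argument for $\sW_{\eps}(X;Y)$ is word-for-word identical, using the convexity of $\eps\mapsto\sW_{\eps}(X;Y)$ from Lemma~\ref{lemma_Concavity_M_eps}, the chain $\sW_{0}(X;Y)\le\sM_{0}(X;Y)\le1$, and the fact that $\sW_{\eps}(X;Y)=0$ for $\eps\ge\eta_{Y}^{2}(X)$.

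I do not expect any real obstacle here; the one point that requires care is that one cannot invoke the familiar ``convex function through the origin'' statement, because $\sM_{0}(X;Y)$ need not equal $1$ (indeed $\sM_{0}(X;Y)<1$ whenever $X$ is weakly independent of $Y$). The weaker inequality $\sM_{0}(X;Y)\le1$ is precisely what makes the displayed chain valid, and everything else is just bookkeeping of the effective domain $[0,\rho_{m}^{2}(X;Y)]$ of $\eps$ versus the interval $(0,1)$ in the statement.
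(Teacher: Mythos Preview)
Your proof is correct and uses the same two ingredients as the paper's argument---convexity of $\eps\mapsto\sM_{\eps}(X;Y)$ from Lemma~\ref{lemma_Concavity_M_eps} together with $\sM_{0}(X;Y)\le 1$---only packaged slightly differently: you apply the convexity inequality directly at the triple $(0,\eps_{1},\eps_{2})$, whereas the paper decomposes $\frac{1-\sM_{\eps}}{\eps}=\frac{1-\sM_{0}}{\eps}+\frac{\sM_{0}-\sM_{\eps}}{\eps}$ and observes that each summand is non-increasing (the first because $1-\sM_{0}\ge 0$, the second as a chordal slope of a concave function). Your treatment of the range $\eps\ge\rho_{m}^{2}(X;Y)$ is in fact more explicit than the paper's, which leaves that piece implicit.
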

\begin{proof}
Consoider the map $\eps\mapsto \sM_0(X;Y)-\sM_{\eps}(X;Y)$. In view of Lemma~\ref{lemma_Concavity_M_eps}, this map is concave and consequently the chordal slope $\frac{\sM_0(X;Y)-\sM_{\eps}(X;Y)}{\eps}$ is decreasing in $\eps$. It therefore follows that
\begin{eqnarray*}
% \nonumber % Remove numbering (before each equation)
  \frac{1-\sM_{\eps}(X;Y)}{\eps} &=& \frac{1-\sM_{0}(X;Y)}{\eps} \\
   && +\frac{\sM_{0}(X;Y)-\sM_{\eps}(X;Y)}{\eps},
\end{eqnarray*}
is decreasing. The proof for $\sW_\eps(X;Y)$ follows similarly.
%Let $P_{Z|Y}$ be an optimal privacy filter, i.e., $\sM_{\eps}(X;Y)=\frac{\mmse(Y|Z)}{\var(Y)}$ and $\rho_m^2(X;Z)\eps$. Given this optimal channel $P_{Z|Y}$ and $\delta>0$, we can define a new channel with an additional symbol $\text{e}$ as follows
%\begin{equation*}
%P_{Z'|Y}(z'|y) = \begin{cases}
%(1-\delta)P_{Z|Y}(z'|y) &\text{if  $z'\neq e$}\\
%\delta &\text{if $z'=\text{e}$ }
%\end{cases}
%\end{equation*}
%It is easy to check that
%$\mmse(Y|Z')=(1-\delta)\mmse(Y|Z)+\delta\var(Y)$ and $\rho_m^2(X; Z')=(1-\delta)\rho_m^2(X;Z)$, see \cite[Lemma 2]{Lei_Zhai_PhD_Thesis}.
% We can then write
%\begin{eqnarray*}
%% \nonumber % Remove numbering (before each equation)
%  \frac{1-\sM_{\eps}(X;Y)}{\eps} &=& \frac{(1-\delta)\left(1-\frac{\mmse{Y|Z}}{\var(Y)}\right)}{(1-\delta)\eps} \\
%   &=& \frac{1-\left(\frac{\mmse(Y|Z')}{\var(Y)}\right)}{\eps'}\\
%&\leq&\frac{1-\sM_{\eps'}(X;Y)}{\eps'},
%\end{eqnarray*}
%where $\eps':=(1-\delta)\eps$ and the inequality comes from the fact that $Z'\in \Gamma_{\eps'}$.
%
%Since $\eta_{Z'}(X)=(1-\delta)\eta_Z(X)$, the proof for $\sW_{\eps}(X;Y)$ follows similarly.
\end{proof}

\begin{corollary}\label{Corollary_UpperBound_M_eps}
For a given $P_{XY}$,
$$\sM_\eps(X;Y)\leq 1-\frac{1}{\rho_m^2(X;Y)}\min\{\eps, \rho_m^2(X;Y)\},$$
and
$$\sW_\eps(X;Y)\leq 1-\frac{1}{\eta_Y^2(X)}\min\{\eps, \eta_Y^2(X)\}.$$
\end{corollary}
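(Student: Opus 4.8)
The plan is to read off the claimed bound from the convexity of $\eps\mapsto\sM_\eps(X;Y)$ proved in Lemma~\ref{lemma_Concavity_M_eps}, together with the values of this map at the two ends of the interval $[0,\rho_m^2(X;Y)]$: a convex function never exceeds the chord joining two of its points, and here that chord is exactly the asserted affine upper bound.

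First I would pin down the two endpoint values. At $\eps=0$, inequality \eqref{Bounds_UP_LB_Weak} already gives $\sM_0(X;Y)\le 1$. At $\eps=\rho_m^2(X;Y)$ one has $\sM_{\rho_m^2(X;Y)}(X;Y)=0$: since maximal correlation satisfies the data processing inequality $\rho_m^2(X;Z)\le\rho_m^2(X;Y)$ along $X\markov Y\markov Z$, the choice $Z=Y$ satisfies $\rho_m^2(X;Z)\le\eps$ whenever $\eps\ge\rho_m^2(X;Y)$, so the infimum in \eqref{Def:M_Eps} is attained at $Z=Y$ and equals $\mmse(Y|Y)/\var(Y)=0$ (this matches the remark after \eqref{Bounds_UP_LB_Weak} that the lower bound $\sM_\eps\ge 0$ is met precisely when $\eps=\rho_m^2(X;Y)$).

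Next, for $0\le\eps\le\rho_m^2(X;Y)$ I would write $\eps=(1-\lambda)\cdot 0+\lambda\,\rho_m^2(X;Y)$ with $\lambda=\eps/\rho_m^2(X;Y)\in[0,1]$ and invoke convexity to obtain
\[
\sM_\eps(X;Y)\le(1-\lambda)\,\sM_0(X;Y)+\lambda\,\sM_{\rho_m^2(X;Y)}(X;Y)=(1-\lambda)\,\sM_0(X;Y)\le 1-\frac{\eps}{\rho_m^2(X;Y)}.
\]
For $\eps\ge\rho_m^2(X;Y)$, monotonicity of $\eps\mapsto\sM_\eps(X;Y)$ together with $\sM_{\rho_m^2(X;Y)}(X;Y)=0$ and $\sM_\eps(X;Y)\ge 0$ forces $\sM_\eps(X;Y)=0$; in both regimes this is exactly $\sM_\eps(X;Y)\le 1-\frac{1}{\rho_m^2(X;Y)}\min\{\eps,\rho_m^2(X;Y)\}$ (the case $\rho_m^2(X;Y)=0$ being vacuous, since then $\sM_\eps\equiv 0$). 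The bound for $\sW_\eps$ follows by the same three steps with $\eta_Y^2(X)$ replacing $\rho_m^2(X;Y)$: convexity is the other half of Lemma~\ref{lemma_Concavity_M_eps}; $\sW_0(X;Y)\le\sM_0(X;Y)\le 1$ by \eqref{Bounds_UP_LB_Weak}; and $\sW_{\eta_Y^2(X)}(X;Y)=0$ because $Z=Y$ lies in $\partial\Gamma_\eps$ precisely when $\eta_Y^2(X)\le\eps$, by the data processing inequality $\eta_Z^2(X)\le\eta_Y^2(X)$.

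I do not expect a genuine obstacle here; the whole content is that a convex function lies below the chord joining its endpoints. The only points needing a moment's care are verifying the right-hand endpoint values $\sM_{\rho_m^2(X;Y)}(X;Y)=0$ and $\sW_{\eta_Y^2(X)}(X;Y)=0$ (each resting on a data processing inequality already invoked in the paper) and checking that the $\min$ in the statement makes the inequality hold automatically for $\eps$ outside $[0,\rho_m^2(X;Y)]$ (resp.\ $[0,\eta_Y^2(X)]$), where $\sM_\eps$ and $\sW_\eps$ vanish.
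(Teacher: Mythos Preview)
Your argument is correct and follows essentially the same route as the paper: convexity of $\eps\mapsto\sM_\eps(X;Y)$ from Lemma~\ref{lemma_Concavity_M_eps} places the graph below the chord joining $(0,\sM_0(X;Y))$ and $(\rho_m^2(X;Y),0)$, and then $\sM_0(X;Y)\le 1$ finishes. Your version is slightly more explicit in verifying the right endpoint value and in handling $\eps\ge\rho_m^2(X;Y)$ via the $\min$, but the core idea is identical.
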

\begin{proof}
Since $\eps\mapsto \sM_{\eps}(X;Y)$ is convex, it is always below the chord connecting $(0, \sM_0(X;Y))$ and $(\rho_m^2(X;Y), 0)$, and hence
$$\sM_\eps(X;Y)\leq \sM_0(X;Y)\left(1-\frac{\eps}{\rho_m^2(X;Y)}\right),$$ from which the result follows because $\sM_0(X;Y)\leq 1$. The proof for $\sW_{\eps}(X;Y)$ is similar.
\end{proof}

%\begin{remark}
%This corollary yields a better upper bound for $\sM_{\eps}$ than the one in \eqref{eq:PAMMSE_Lowerbound}. This is because for any Markov chain $X\markov Y\markov Z$, we have $\rho_m(X;Z)\leq \rho_m(X;Y)\rho_m(Y;Z)$ (see, e.g., \cite[Proposition 1.1.13]{tensorizing_maximal_correlation} and \cite{Asoodeh_CWIT})
%and hence any $P_{Z|X}$ which results in $\rho_m^2(Y;Z)\leq \frac{\eps}{\rho_m^2(X;Y)}$, belongs to $\Gamma_{\eps}(P_X)$, thus implying $\varrho^2_\eps(X;Y)\geq \frac{\eps}{\rho_m^2(X;Y)}$.
%\end{remark}
\begin{remark}\label{remark_Erasure_Optimal_Corollary_2}
  Note that simple calculations reveal that the upper bounds in Corollary~\ref{Corollary_UpperBound_M_eps} are achieved by an erasure channel (see Fig.~\ref{fig:LowerBound_Filter}). For example, the erasure channel that achieves the upper bound of $\sM_\eps(X;Y)$ is
$$P_{Z|Y}(z|y)=\left\{
  \begin{array}{ll}
    1-\tilde{\delta} , & \hbox{ \text{if} $z=y$} \\
    \tilde{\delta}, & \hbox{ \text{if} $z=$~\text{e},}
  \end{array}\right.$$
for all $y\in \Y$ and the erasure probability %$\tilde{\delta}=1-\frac{\eps}{\rho_m^2(X;Y)}$
\begin{equation}\label{Delta_Tilde}
 \tilde{\delta}=1-\frac{\eps}{\rho_m^2(X;Y)},
\end{equation}
for $0\leq\eps\leq \rho_m^2(X;Y)$. This is because for the channel $P_{Z_{\delta}|Y}$, illustrated in Fig.~\ref{fig:LowerBound_Filter}, we have $\rho_m^2(X;Z_\delta)=(1-\delta)\rho_m^2(X;Y)$ and $\rho_m^2(Y;Z_\delta)=1-\delta$. Therefore, if $\delta=\tilde{\delta}$, defined in \eqref{Delta_Tilde}, $Z_{\tilde{\delta}}\in\Gamma_{\eps}$. A simple calculation verifies that for this channel $$\mmse(Y|Z_{\tilde{\delta}})=\var(Y)\tilde{\delta}=\var(Y)\left(1-\frac{\eps}{\rho_m^2(X;Y)}\right).$$

\end{remark}
\begin{figure}[t]
\centering
\begin{tikzpicture}
        \node (x) [circle] at (-3.7,-0.9) {$X$};
        \draw (-2,-1) node[fill=blue!20, anchor=base] (channel) {$~~P_{Y|X}~~$};
        \path [arrow] (x) -- (channel);
        \node (y) [circle] at (0,-0.9) {$Y$};
        \path [arrow] (channel) -- (y);
        \node (y1) [circle] at (0.15,-0.2) {};
        \node (y2) [circle] at (0.15,-0.6) {};
        \node (y3) [circle] at (0.15,-1) {};
        \node (y4) [circle] at (0.15,-1.4) {};
        \node (y5) [circle] at (0.15,-1.8) {};

        \node (z1) [circle] at (2,-0.2) {};
        \node (z2) [circle] at (2,-0.6) {};
        \node (z3) [circle] at (2,-1) {};
        \node (z4) [circle] at (2,-1.4) {};
        \node (z5) [circle] at (2,-1.8) {};
        \node (z6) [circle] at (2,-2.4) {};
        \node (z8) [circle] at (2.1,-2.35) {e};
        \node (z7) [circle] at (2.6,-0.9) {$Z_{\delta}$};
        \draw[thick]  (0.2,-0.2) -- (z1);
        \draw[thick]  (0.2,-0.2) -- (1.8,-2.4);
        \draw[thick]  (0.2,-0.6) -- (z2);
        \draw[thick]  (0.2,-0.6) -- (1.8,-2.4);
        \draw[thick]  (0.2,-1) -- (z3);
        \draw[thick]  (0.2,-1) -- (1.8,-2.4);
        \draw[thick]  (0.2,-1.4) -- (z4);
        \draw[thick]  (0.2,-1.4) -- (1.8,-2.4);
        \draw[thick]  (0.2,-1.8) -- (z5);
        \draw[thick]  (0.2,-1.8) -- (1.8,-2.4);
\end{tikzpicture}
\caption{\small{The channel that achieves the upper bound in Corollary~\ref{Corollary_UpperBound_M_eps} where $Z_{\delta}$ is the output of an erasure channel with erasure probability specified in \eqref{Delta_Tilde}.}} \label{fig:LowerBound_Filter}
\end{figure}
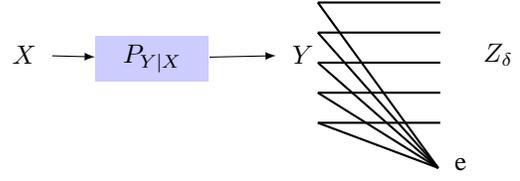

\subsection{Binary Input Symmetric Output $P_{X|Y}$}
We now turn our attention to the special case where $P_{X|Y}$ belongs to a family of channels called binary-input symmetric-output (BISO) channels, see e.g., \cite{Shamai_BISO1, Shamai_BISO2}.
For $Y\sim\sBer(p)$, $P_{X|Y}$ is BISO if, for any $x\in \X=\{0, \pm 1, \pm 2, \dots, \pm k\}$, we have $P_{X|Y}(x|1)=P_{X|Y}(-x|0)$. This clearly implies that $p_0:=P_{X|Y}(0|0)=P_{X|Y}(0|1)$. As pointed out in  \cite{Shamai_BISO2}, one can always assume that the output alphabet $\X=\{\pm 1, \pm 2, \dots, \pm k\}$ has even number of elements by splitting the symbol $0$ into two symbols and assigning equal probabilities.
This family of channels can also be characterized using the definition of \emph{quasi-symmetric} channels \cite[Definition 4.17]{Fady_Lecture_note}. A channel $\mathsf{W}$ is BISO if (after making $|\X|$ even) the transition matrix $P_{X|Y}$ can be  partitioned along its columns into binary-input binary-output sub-arrays in which rows are permutations of each other and the column sums are equal.
For example, binary symmetric channels and binary erasure channels are both BISO.

In the following theorem, we show that $\sW_\eps(X;Y)$ can be calculated in closed-form when $P_{X|Y}$ is a BISO channel.
\begin{theorem}\label{Lemma_BISO}
Let $Y\sim\sBer(p)$ and $P_{X|Y}$ be a BISO channel. Then for $0\leq\eps\leq \rho_m^2(X;Y)$, we have $$\sW_\eps(X;Y)=1-\eps\frac{\var(X)}{4\var(Y)\E^2[X|Y=1]},$$
and
$$1-\eps\frac{\var(X)}{4\var(Y)\E^2[X|Y=1]}\leq \sM_\eps(X;Y)\leq  1-\frac{\eps}{\rho_m^2(X;Y)}.$$
\end{theorem}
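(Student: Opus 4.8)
The plan is to obtain the two claimed bounds on $\sM_\eps(X;Y)$ from results already in hand once the exact value of $\sW_\eps(X;Y)$ is known, putting the real effort into $\sW_\eps(X;Y)$. The upper bound $\sM_\eps(X;Y)\le 1-\eps/\rho_m^2(X;Y)$ is precisely Corollary~\ref{Corollary_UpperBound_M_eps} restricted to $0\le\eps\le\rho_m^2(X;Y)$, and the lower bound is just $\sM_\eps(X;Y)\ge\sW_\eps(X;Y)$ from \eqref{Bounds_UP_LB_Weak}. So everything reduces to showing $\sW_\eps(X;Y)=1-\eps\,\var(X)/\big(4\var(Y)\E^2[X|Y=1]\big)$. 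For this I would use that, by Definition~\ref{Def:strong_estimation_privacy} and Remark~\ref{remark_Correlation_Ratio}, $\sW_\eps(X;Y)=1-\sup\eta^2_Z(Y)$, the supremum running over all channels $P_{Z|Y}$ with $X\markov Y\markov Z$ and $\eta^2_Z(X)\le\eps$ (the weak-privacy constraint $\mmse(X|Z)\ge(1-\eps)\var(X)$ being the same as $\eta^2_Z(X)\le\eps$).

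The key step exploits that $Y$ is binary. Set $q(Z):=\Pr(Y=1|Z)$, a $[0,1]$-valued random variable with $\E[q(Z)]=p$; then $\E[Y|Z]=q(Z)$, and the Markov condition gives $\E[X|Z]=\E[\E[X|Y]|Z]=\E[X|Y=0]+\big(\E[X|Y=1]-\E[X|Y=0]\big)q(Z)$. For a BISO channel the symmetry $P_{X|Y}(x|1)=P_{X|Y}(-x|0)$ forces $\E[X|Y=0]=-\E[X|Y=1]=:-\mu_1$, so $\E[X|Z]=\mu_1\big(2q(Z)-1\big)$. Both conditional expectations are therefore affine in $q(Z)$, and taking variances yields $\eta^2_Z(Y)=\var(q(Z))/\var(Y)$ and $\eta^2_Z(X)=4\mu_1^2\var(q(Z))/\var(X)$, hence
\[
\eta^2_Z(X)=\eta^2_Y(X)\,\eta^2_Z(Y),\qquad\text{where}\quad \eta^2_Y(X)=\frac{\var(\E[X|Y])}{\var(X)}=\frac{4\mu_1^2\var(Y)}{\var(X)},
\]
for \emph{every} channel with $X\markov Y\markov Z$.

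With this identity the optimisation is routine: the constraint $\eta^2_Z(X)\le\eps$ reads $\eta^2_Z(Y)\le\eps/\eta^2_Y(X)$, so $\sup\eta^2_Z(Y)\le\eps/\eta^2_Y(X)$; and for $\eps$ in the admissible range $0\le\eps\le\eta^2_Y(X)=4\var(Y)\mu_1^2/\var(X)$ this bound is attained by the erasure channel $P_{Z|Y}$ that sets $Z=Y$ with probability $1-\tilde\delta$ and $Z=\mathrm{e}$ with probability $\tilde\delta:=1-\eps/\eta^2_Y(X)$: a one-line computation gives $\var(q(Z))=(1-\tilde\delta)\var(Y)$, whence $\eta^2_Z(Y)=\eps/\eta^2_Y(X)$ and $\eta^2_Z(X)=\eps$, so $Z\in\partial\Gamma_\eps$. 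Thus $\sup\eta^2_Z(Y)=\eps/\eta^2_Y(X)$ and $\sW_\eps(X;Y)=1-\eps/\eta^2_Y(X)=1-\eps\,\var(X)/\big(4\var(Y)\E^2[X|Y=1]\big)$, which by the first paragraph also supplies the lower bound on $\sM_\eps(X;Y)$.

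I expect the only genuine obstacle to be noticing the identity $\eta^2_Z(X)=\eta^2_Y(X)\,\eta^2_Z(Y)$ — that is, seeing that a binary $Y$ reduces the whole problem to the scalar $\var(q(Z))$, and that the BISO symmetry is exactly what makes $\E[X|Z]$ an affine, sign-symmetric function of $q(Z)$. That $X$ need not be binary is harmless: it enters only through $\E[X|Y]$, since $X\markov Y\markov Z$ forces $\E[X|Z]$ to depend on $Z$ solely via $q(Z)$. The remaining points (the boundary cases $\eps=0$ and $\eps=\eta^2_Y(X)$, where $\sW_\eps=0$, and checking that the erasure channel meets the constraint with equality) are immediate.
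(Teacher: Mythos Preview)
Your proof is correct and follows the same underlying idea as the paper's, but packages it more cleanly. Both proofs exploit that a binary $Y$ forces the constraint and the objective to be governed by the single scalar $q(Z)=\Pr(Y=1\mid Z)$, and both use the BISO symmetry $\E[X\mid Y=0]=-\E[X\mid Y=1]$ to make $\E[X\mid Z]$ affine in $q(Z)$. The paper, however, works directly with $\mmse(X|Z)=\E[\var(X|Z)]$: it expands $\var(X|Z=z)$ symbol-by-symbol over $\X=\{\pm1,\dots,\pm k\}$, invokes the BISO identity $\var(X|Y=0)=\var(X|Y=1)$ as well, and only after a page of algebra obtains the affine relation $\mmse(Y|Z)=(\mmse(X|Z)-\var(X|Y=1))/(4\E^2[X|Y=1])$. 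Your route via correlation ratios avoids this computation entirely: once $\E[X|Z]=\mu_1(2q(Z)-1)$ is in hand, the identity $\eta^2_Z(X)=\eta^2_Y(X)\,\eta^2_Z(Y)$ falls out in one line and the optimisation is immediate. Your proof is also slightly more complete on achievability, since you exhibit the erasure channel attaining the bound, whereas the paper silently substitutes $\mmse(X|Z)=(1-\eps)\var(X)$ into its affine formula. One small point: you state the admissible range as $0\le\eps\le\eta^2_Y(X)$, while the theorem is phrased for $0\le\eps\le\rho_m^2(X;Y)$; since $\eta^2_Y(X)\le\rho_m^2(X;Y)$ in general, for $\eps$ in the gap both sides of the claimed equality for $\sW_\eps$ can fail to lie in $[0,1]$, so the two ranges agree only when $\eta^2_Y(X)=\rho_m^2(X;Y)$---a wrinkle present in the paper's statement as well.
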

\begin{proof}
The proof is given in Appendix~\ref{Appebdix_Lemma_BISO}.
\end{proof}
%It is clear from this theorem that $\sW_{0}(X;Y)=1$. As stated earlier, $\sW_{0}(X;Y)=1$ if and only if the set of vectors $\{P_{X|Y}(\cdot|y), y\in\Y\}$ are linearly independent. However, if $Y$ is binary, then the vectors $P_{X|Y}(\cdot|0)$ and $P_{X|Y}(\cdot|1)$ are linearly independent if and only if $X$ and $Y$ are not independent.

%Unfortunately these bounds do not coincide in general for all BISO channels even for uniform case $p=\frac{1}{2}$. This is because, although from \eqref{Maximal_correlation_Equivalent} we know that $\rho_m^2(X;Y)\geq \frac{\var(X)}{\E[X|Y=1]}$ when $p=\frac{1}{2}$, it is possible that $\rho_m^2(X;Y)>\frac{\E[X|Y=1]}{\var(X)}$, as the following example shows.
%Consider $Y\sim\sBer(\frac{1}{2})$ and the following BISO channel:
%$$P_{X|Y}=\left[ \begin{array}{cccccc}
%0.03 & 0.4 & 0.1&0.07 & 0.3 & 0.1 \\
%0.1 & 0.3 & 0.07&0.1 & 0.4 & 0.01
%\end{array} \right], $$
%for which $\rho_m^2(X;Y)=0.058$ but $\frac{\E[X|Y=1]}{\var(X)}=0.00009$.
Similar to \cite{Calmon_bounds_Inference}, we also consider the tradeoff between strong estimation privacy and the probability of correctly guessing $Y$. To quantify this, let $\hat{Y}:\Z\to \Y$ be the Bayes decoding map. The resulting (minimum) error probability is $\Pr(\hat{Y}(Z)\neq Y)$. Let
\begin{equation}\label{def:error_probability_EPS}
  \sP(X;Y):=\min_{Z\in\partial\Gamma_{\eps}}\Pr(\hat{Y}(Z)\neq Y).
\end{equation}
Note that when $Z$ is independent of $Y$, then the optimal Bayes decoding map yields $\Pr(\hat{Y}(Z)\neq Y)=1-p$, if $p=P_Y(1)\geq \frac{1}{2}$.  Using a similar argument as \cite[Appendix A]{Shamai_MMSE_Error}, we can establish the following connection between $\sP(X;Y)$ and $\sW_{\eps}(X;Y)$. \begin{proposition}\label{preposition_P_E}
  Let $Y\sim\sBer(p)$ for $p\geq \frac{1}{2}$. Then we have
  $$\sW_{\eps}(X;Y)\leq\frac{\sP(X;Y)}{\var(Y)}\leq 2\sW_{\eps}(X;Y)$$
\end{proposition}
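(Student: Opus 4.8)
The plan is to reduce the whole statement to a pointwise (in the realization of $Z$) comparison between the conditional variance of $Y$ and the conditional Bayes error, and then integrate. Fix an arbitrary privacy filter $P_{Z|Y}$ producing some $Z\in\partial\Gamma_\eps$ and write $\eta(z):=\E[Y|Z=z]=\Pr(Y=1|Z=z)$. Since $Y$ is $\{0,1\}$-valued, $\var(Y|Z=z)=\eta(z)(1-\eta(z))$, so $\mmse(Y|Z)=\E[\eta(Z)(1-\eta(Z))]$. On the other hand, the Bayes decoder declares $\hat Y(z)=1$ when $\eta(z)\ge\frac12$ and $\hat Y(z)=0$ otherwise, so that $\Pr(\hat Y(Z)\neq Y\,|\,Z=z)=\min\{\eta(z),1-\eta(z)\}$ (the value at a tie $\eta(z)=\frac12$ is harmless), and hence $\Pr(\hat Y(Z)\neq Y)=\E[\min\{\eta(Z),1-\eta(Z)\}]$.

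Next I would record the elementary sandwich $u(1-u)\le\min\{u,1-u\}\le 2u(1-u)$, valid for all $u\in[0,1]$; both inequalities follow at once by splitting into the cases $u\le\frac12$ and $u\ge\frac12$. Applying this with $u=\eta(Z)$ and taking expectations yields, for every fixed $Z$,
$$\mmse(Y|Z)\ \le\ \Pr(\hat Y(Z)\neq Y)\ \le\ 2\,\mmse(Y|Z).$$
Dividing through by $\var(Y)=p(1-p)$ gives $\frac{\mmse(Y|Z)}{\var(Y)}\le\frac{\Pr(\hat Y(Z)\neq Y)}{\var(Y)}\le 2\,\frac{\mmse(Y|Z)}{\var(Y)}$ for each $Z\in\partial\Gamma_\eps$.

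It then remains to pass to the optimal filters, which is immediate from the per-$Z$ bounds. For the lower bound, let $Z^\star\in\partial\Gamma_\eps$ attain (or, via a sequence, nearly attain) $\sP(X;Y)$; then $\sW_\eps(X;Y)\le\frac{\mmse(Y|Z^\star)}{\var(Y)}\le\frac{\Pr(\hat Y(Z^\star)\neq Y)}{\var(Y)}=\frac{\sP(X;Y)}{\var(Y)}$. For the upper bound, for every $Z\in\partial\Gamma_\eps$ we have $\frac{\sP(X;Y)}{\var(Y)}\le\frac{\Pr(\hat Y(Z)\neq Y)}{\var(Y)}\le 2\,\frac{\mmse(Y|Z)}{\var(Y)}$, and taking the infimum over $Z\in\partial\Gamma_\eps$ gives $\frac{\sP(X;Y)}{\var(Y)}\le 2\,\sW_\eps(X;Y)$. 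The hypothesis $p\ge\frac12$ is only a labeling convention (otherwise relabel $0\leftrightarrow1$) and is not used beyond fixing $\var(Y)=p(1-p)$.

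There is essentially no hard step here: the only points needing care are the pointwise identification of the Bayes error with $\min\{\eta,1-\eta\}$ (including the innocuous tie-breaking) and making sure the per-$Z$ inequalities are instantiated at the correct optimizer on each side so that the two infima/minima line up. I expect the minor bookkeeping around whether the minimum in the definition of $\sP(X;Y)$ is attained (handled, if desired, by compactness of the feasible channels via the Support Lemma, or simply by a minimizing sequence) to be the most delicate part, and it is routine.
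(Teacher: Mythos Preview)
Your proof is correct and follows essentially the same approach as the paper: both arguments establish the per-$Z$ sandwich $\mmse(Y|Z)\le\Pr(\hat Y(Z)\neq Y)\le 2\,\mmse(Y|Z)$ and then pass to the infimum. The paper derives this inequality by expanding $\mmse(Y|Z)$ in terms of the likelihoods $P_\pm(z):=P_{Z|Y}(z|\pm)$ and splitting $\Z$ according to the sign of $(1-p)P_-(z)-pP_+(z)$, whereas you phrase the same computation more compactly via the posterior $\eta(z)$ and the elementary inequality $u(1-u)\le\min\{u,1-u\}\le 2u(1-u)$; your added care in matching the two infima is a detail the paper leaves implicit.
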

\begin{proof}
First note that
$$\E[Y|Z=z]=P_{Y|Z}(1|z)=\frac{pP_+(z)}{(1-p)P_-(z)+pP_+(z)},$$ where $P_+(z):=P_{Z|Y}(z|+1)$ and $P_-(z):=P_{Z|Y}(z|-1)$. It follows that
\begin{eqnarray*}
% \nonumber % Remove numbering (before each equation)
  \mmse(Y|Z)&=& \sum_{z\in\Z}\sum_{y\in\{0,1\}}P_{YZ}(y,z)\E[\left(y-\E[Y|Z=z]\right)^2]\\
   &=& p(1-p)\sum_{z\in \Z}\frac{P_-(z)P_+(z)}{(1-p)P_-(z)+pP_+(z)}\\
   &=&p(1-p)[\sum_{z\in\Z_+}\frac{P_-(z)P_+(z)}{(1-p)P_-(z)+pP_+(z)}\\
 &&+\sum_{z\in\Z_-}\frac{P_-(z)P_+(z)}{(1-p)P_-(z)+pP_+(z)}],
\end{eqnarray*}
where $\Z_-=\{z\in\Z:(1-p)P_-(z)\geq pP_+(z)\}$ and $\Z_+=\{z\in\Z:pP_+(z)\geq (1-p)P_-(z)\}$.
Since $$\Pr(\hat{Y}(Z)\neq Y )=p\sum_{z\in\Z_-}P_+(z)+(1-p)\sum_{z\in\Z_+}P_-(z),$$ we then have
$$\frac{1}{2}\Pr(\hat{Y}(Z)\neq Y)\leq\mmse(Y|Z)\leq \Pr(\hat{Y}(Z)\neq Y),$$
from which the result follows immediately.
\end{proof}
%Using this proposition, we can now show that for $Y\sim \sBer(p)$ and BISO $P_{X|Y}$, we have
%$$\sP(X;Y)\geq \var(Y)-\eps\frac{\var(X)}{4\E^2[X|Y=1]},$$
Calmon et al.\ \cite{Calmon_bounds_Inference} considered the same problem for $X=Y$, i.e., minimizing $\Pr(\hat{X}(Z)\neq X)$ over all $P_{Z|X}$ such that $\rho_m^2(X;Z)\leq \eps$ and showed that the best privacy-constrained error probability is lower bounded by a straight line of $\eps$ with negative slope. Combining Theorem~\ref{Lemma_BISO} and Proposition~\ref{preposition_P_E}, we can lower bound $\sP(X;Y)$ for all BISO $P_{X|Y}$ by a straight line in $\eps$  as follows:
$$\sP(X;Y)\geq \var(Y)-\eps\frac{\var(X)}{4\E^2[X|Y=1]},$$ which generalizes \cite[Corollaries 3,5]{Calmon_bounds_Inference}.

In the following, we consider two examples of BISO channels for which the bounds in Theorem~\ref{Lemma_BISO} coincide. First consider $P_{X|Y}$ being a binary symmetric channel with crossover probability $\alpha$, denoted as $\text{BSC}(\alpha)$.
\begin{lemma}\label{Lemma_Binary_Y}
For $Y\sim\sBer(p)$ and $P_{X|Y}= \text{BSC}(\alpha)$ for $\alpha\in[0, \frac{1}{2})$, we have for $0\leq\eps\leq \rho_m^2(X;Y)$,
$$1-\frac{\eps\var(X)}{4(1-2\alpha)^2\var(Y)}\leq\sM_{\eps}(X;Y)\leq 1-\frac{\eps}{\rho_m^2(X;Y)},$$
and
$$\var(Y)-\frac{\eps\var(X)}{4(1-2\alpha)^2}\leq\sP(X;Y)\leq 2\left[\var(Y)-\frac{\eps\var(X)}{4(1-2\alpha)^2}\right].$$
Moreover, if $p=\frac{1}{2}$,
$$\sM_{\eps}(X;Y)=\sW_{\eps}(X;Y)=1-\frac{\eps}{(1-2\alpha)^2},$$ and the optimal channel is $\text{BEC}(\tilde{\delta})$ (Fig.~\ref{fig:Bsc}) where
\begin{equation}\label{Delta_Tilde_BSC}
  \tilde{\delta}=1-\frac{\eps}{(1-2\alpha)^2}.
\end{equation}
\end{lemma}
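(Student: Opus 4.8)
The plan is to read everything off from Theorem~\ref{Lemma_BISO} and Proposition~\ref{preposition_P_E}, since $\text{BSC}(\alpha)$ is a BISO channel. Concretely, I would view it with input $Y\in\{0,1\}$ and output $X\in\{-1,+1\}$, so that $P_{X|Y}(1|1)=P_{X|Y}(-1|0)=1-\alpha$ and $P_{X|Y}(-1|1)=P_{X|Y}(1|0)=\alpha$, which indeed obeys the BISO symmetry $P_{X|Y}(x|1)=P_{X|Y}(-x|0)$. The only channel quantity entering Theorem~\ref{Lemma_BISO} is $\E[X|Y=1]=(1-\alpha)\cdot 1+\alpha\cdot(-1)=1-2\alpha$, so $4\E^2[X|Y=1]=4(1-2\alpha)^2$. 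Substituting this into Theorem~\ref{Lemma_BISO} immediately yields the first displayed claim: $\sW_\eps(X;Y)=1-\frac{\eps\var(X)}{4(1-2\alpha)^2\var(Y)}$, together with $1-\frac{\eps\var(X)}{4(1-2\alpha)^2\var(Y)}\le\sM_\eps(X;Y)\le 1-\frac{\eps}{\rho_m^2(X;Y)}$.

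Next I would obtain the $\sP(X;Y)$ bounds. Assuming $p\ge\frac12$ (without loss of generality: relabeling $Y\mapsto 1-Y$ and $X\mapsto -X$ leaves $\alpha$, $\var(X)$, $\var(Y)$, $\E^2[X|Y=1]$, and the Bayes error probability unchanged, and affects neither $\Gamma_\eps$ nor the ratio $\mmse(Y|Z)/\var(Y)$), Proposition~\ref{preposition_P_E} gives $\sW_\eps(X;Y)\le\frac{\sP(X;Y)}{\var(Y)}\le 2\sW_\eps(X;Y)$. Multiplying by $\var(Y)$ and inserting the closed form of $\sW_\eps$ from the previous step yields $\var(Y)-\frac{\eps\var(X)}{4(1-2\alpha)^2}\le\sP(X;Y)\le 2\big[\var(Y)-\frac{\eps\var(X)}{4(1-2\alpha)^2}\big]$.

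For $p=\frac12$ I would show the sandwich on $\sM_\eps$ collapses. Here $Y$ is uniform, so $\var(Y)=\frac14$, and since a BSC driven by a uniform input has a uniform output, $X$ is uniform on $\{-1,+1\}$, so $\E[X]=0$ and $\var(X)=1$; hence $\frac{\var(X)}{4\var(Y)(1-2\alpha)^2}=\frac{1}{(1-2\alpha)^2}$ and the left end of the sandwich equals $1-\frac{\eps}{(1-2\alpha)^2}$. For the right end I need $\rho_m^2(X;Y)$: for a pair of binary random variables, maximal correlation equals the absolute Pearson correlation coefficient (every zero-mean function of a binary variable is a scalar multiple of the standardized variable, so only affine test functions matter), and a direct computation using $\E[X]=0$ gives $\rho(X,Y)=\frac{\E[XY]}{\sqrt{\var(X)\var(Y)}}=\frac{(1/2)(1-2\alpha)}{1/2}=1-2\alpha$, so $\rho_m^2(X;Y)=(1-2\alpha)^2$ because $\alpha<\frac12$. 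Thus the right end of the sandwich is also $1-\frac{\eps}{(1-2\alpha)^2}$, forcing $\sM_\eps(X;Y)=1-\frac{\eps}{(1-2\alpha)^2}$; combined with $\sW_\eps(X;Y)\le\sM_\eps(X;Y)$ from \eqref{Bounds_UP_LB_Weak} and the closed form of $\sW_\eps$, this also gives $\sW_\eps(X;Y)=\sM_\eps(X;Y)$. Finally, by Remark~\ref{remark_Erasure_Optimal_Corollary_2} the erasure channel from $Y$ with erasure probability $\tilde{\delta}=1-\frac{\eps}{\rho_m^2(X;Y)}=1-\frac{\eps}{(1-2\alpha)^2}$ attains the upper bound of Corollary~\ref{Corollary_UpperBound_M_eps}, which we have just shown equals $\sM_\eps(X;Y)$; hence $\text{BEC}(\tilde{\delta})$ is an optimal privacy filter.

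There is no serious obstacle here, since the statement is essentially a specialization of the BISO results, but two small points need care. One is pinning down $\rho_m^2(X;Y)$, which is clean only because of the binary-alphabet reduction of maximal correlation to the ordinary correlation coefficient. The other is verifying that at $p=\frac12$ the quantity $\frac{\var(X)}{4\var(Y)\E^2[X|Y=1]}$ from Theorem~\ref{Lemma_BISO} equals $\frac{1}{\rho_m^2(X;Y)}$ \emph{exactly}, which is precisely what makes the lower and upper bounds on $\sM_\eps$ coincide in the symmetric case; they do not coincide for $p\neq\frac12$, which is why equality is claimed only then.
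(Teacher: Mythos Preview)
Your proposal is correct and follows essentially the same route as the paper: both specialize Theorem~\ref{Lemma_BISO} and Proposition~\ref{preposition_P_E} to the BSC by computing $\E[X|Y=1]=1-2\alpha$, then argue that at $p=\tfrac12$ the lower and upper bounds on $\sM_\eps$ coincide (because $\rho_m^2(X;Y)=(1-2\alpha)^2$ there), and finally invoke Remark~\ref{remark_Erasure_Optimal_Corollary_2} for optimality of the erasure filter. Your write-up is in fact slightly more careful than the paper's, making explicit both the WLOG reduction to $p\ge\tfrac12$ needed for Proposition~\ref{preposition_P_E} and the binary-alphabet argument that identifies $\rho_m$ with $|\rho|$.
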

\begin{proof}
  Since $\X=\{-1, +1\}$, it is straightforward to see that $\E[X|Y=1]=1-2\alpha$, and $4\var(Y)(1-2\alpha)^2=\var(X)-4\alpha(1-\alpha)$, and for a fixed $0\leq\alpha< \frac{1}{2}$, $\rho_m^2(X;Y)\leq (1-2\alpha)^2$, which is tight if and only if $p=0.5$. The results follow from Theorem~\ref{Lemma_BISO} and Proposition~\ref{preposition_P_E}. Since for $p=0.5$, the upper bound of Corollary~\ref{Corollary_UpperBound_M_eps} is achieved, hence according to Remark~\ref{remark_Erasure_Optimal_Corollary_2}, the optimal privacy filter is an erasure channel with erasure probability \eqref{Delta_Tilde_BSC}.
%then $\sM_\eps(X;Y)$ achieves  the upper bound that appears in Corollary~\ref{Corollary_UpperBound_M_eps} and hence we can conclude that in this case the erasure channel illustrated in Fig.~\ref{fig:LowerBound_Filter} is an optimal privacy filter, see Fig.~\ref{fig:Bsc}.
\end{proof}
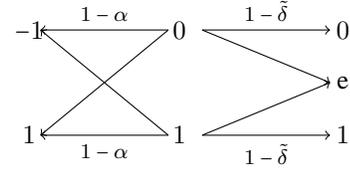
\begin{figure}
\centering
\begin{tikzpicture}
%\tikzstyle{every node} = [circle, fill = gray!30]
\node (a) [circle] at (0,0) {$1$};
\node (b) [circle] at (0,1.4) {$-1$};
\node (c) [circle] at (2,0) {$1$};
\node (d) [circle] at (2,1.4) {$0$};
\draw[<-] (0.15,0) -- (1.85,0) node[pos=.5,sloped,below] {\footnotesize{$1-\alpha$}};
\draw[<-] (0.15,0) -- (1.85,1.4) node[pos=.75,sloped,below] {};
\draw[<-] (0.15,1.4) -- (1.85,0) node[pos=.25,sloped,above] {};
\draw[<-] (0.15,1.4) -- (1.85,1.4) node[pos=.5,sloped,above] {\footnotesize{$1-\alpha$}};
\node (e) [circle] at (2.3,0) {};
\node (f) [circle] at (2.3,1.4) {};
\node (g) [circle] at (4.17,0) {$1$};
\node (h) [circle] at (4.17,1.4) {$0$};
\node (k) [circle] at  (4.17, 0.7) {\text{e}};
\draw[->] (2.3,0) -- (4,0) node[pos=.5,sloped,below] {\footnotesize{$1-\tilde{\delta}$}};
\draw[->] (2.3,0) -- (4, 0.7) node[pos=.5,sloped,below] {};
\draw[->] (2.3,1.4) -- (4,1.4) node[pos=.5,sloped,above] {\footnotesize{$1-\tilde{\delta}$}};
\draw[->] (2.3,1.4) -- (4, 0.7) node[pos=.5,sloped,above] {};
\end{tikzpicture}
\caption{\small{Optimal privacy filter where $P_{Y|X}=BSC(\alpha)$ with $Y\sim\sBer(\frac{1}{2})$ where $\tilde{\delta}$ is specified in \eqref{Delta_Tilde_BSC}}.} \label{fig:Bsc}
\end{figure}

We next consider $P_{X|Y}$ being a binary erasure channel with erasure probability $\delta$, denoted as $\text{BEC}(\delta)$.
\begin{lemma}\label{Lemma_Binary_Y_BEC}
For $Y\sim\sBer(p)$ and $P_{X|Y}= \text{BEC}(\delta)$ for $\delta\in[0, 1)$, we have for $0\leq\eps\leq \rho_m^2(X;Y)$,
$$1-\frac{\eps\var(X)}{4\var(Y)(1-\delta)^2}\leq\sM_{\eps}(X;Y)\leq1-\frac{\eps}{1-\delta},$$
and
$$\var(Y)-\frac{\eps\var(X)}{4(1-\delta)^2}\leq\sP(X;Y)\leq 2\left[\var(Y)-\frac{\eps\var(X)}{4(1-\delta)^2}\right].$$
Moreover, if $p=\frac{1}{2}$,
$$\sM_{\eps}(X;Y)=1-\frac{\eps}{1-\delta},$$ and the optimal channel is $\text{BEC}(\tilde{\delta})$ (Fig.~\ref{fig:BEC}) where
\begin{equation}\label{Delta_Tilde_BEC}
  \tilde{\delta}=1-\frac{\eps}{1-\delta}.
\end{equation}
\end{lemma}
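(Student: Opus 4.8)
The plan is to derive every assertion by specializing the already-established results to $P_{X|Y}=\text{BEC}(\delta)$, following the template of the proof of Lemma~\ref{Lemma_Binary_Y} for the BSC. First I would fix the representation: the binary erasure channel is BISO with $\X=\{-1,0,+1\}$, the symbol $0$ being the erasure symbol, so that $Y=1$ produces $X=+1$ with probability $1-\delta$ and $X=0$ with probability $\delta$, and symmetrically for $Y=0$. A one-line moment computation then gives $\E[X\mid Y=1]=1-\delta$, hence $\E^2[X\mid Y=1]=(1-\delta)^2$. Substituting this into the formula of Theorem~\ref{Lemma_BISO} yields
\[
\sW_\eps(X;Y)=1-\frac{\eps\,\var(X)}{4\var(Y)(1-\delta)^2},
\]
and since $\sW_\eps(X;Y)\le\sM_\eps(X;Y)$ by \eqref{Bounds_UP_LB_Weak}, this is exactly the claimed lower bound on $\sM_\eps(X;Y)$.

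Next I would compute $\rho_m^2(X;Y)$ directly from \eqref{Maximal_correlation_Equivalent} with the roles of $X$ and $Y$ interchanged: since every zero-mean function of the binary $Y$ is a scalar multiple of $Y-p$, and since conditioning on a non-erasure output of the BEC determines $Y$ while conditioning on the erasure symbol leaves $Y$ independent of $X$, one gets $\E\!\left[\E^2[(Y-p)\mid X]\right]=(1-\delta)\var(Y)$, whence $\rho_m^2(X;Y)=1-\delta$ for every $p$. Combined with the upper bound $\sM_\eps(X;Y)\le 1-\eps/\rho_m^2(X;Y)$ of Theorem~\ref{Lemma_BISO} (equivalently Corollary~\ref{Corollary_UpperBound_M_eps}), this gives $\sM_\eps(X;Y)\le 1-\eps/(1-\delta)$ and pins down the admissible range $0\le\eps\le 1-\delta$. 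The two bounds on $\sP(X;Y)$ then follow by inserting the closed form of $\sW_\eps(X;Y)$ into the sandwich $\sW_\eps(X;Y)\le\sP(X;Y)/\var(Y)\le 2\sW_\eps(X;Y)$ of Proposition~\ref{preposition_P_E} and clearing $\var(Y)$; the hypothesis $p\ge\frac12$ of that proposition is harmless, since relabelling $0\leftrightarrow1$ (and, on the $X$ side, $+1\leftrightarrow-1$) preserves both the $\text{BEC}(\delta)$ structure and all the quantities appearing in the statement.

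For $p=\frac12$ I would simply note that symmetry forces $\E[X]=0$, $\var(X)=1-\delta$ and $\var(Y)=\tfrac14$, so the lower bound derived above collapses to $1-\eps/(1-\delta)$ and therefore meets the upper bound; hence $\sM_\eps(X;Y)=\sW_\eps(X;Y)=1-\eps/(1-\delta)$. Equivalently, at $p=\frac12$ the bound of Corollary~\ref{Corollary_UpperBound_M_eps} is attained, so Remark~\ref{remark_Erasure_Optimal_Corollary_2} identifies the optimal privacy filter as the erasure channel $Z_{\tilde\delta}$ with $\tilde\delta=1-\eps/\rho_m^2(X;Y)=1-\eps/(1-\delta)$, i.e.\ $\text{BEC}(\tilde\delta)$ as in Fig.~\ref{fig:BEC} and \eqref{Delta_Tilde_BEC}.

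I do not anticipate a genuine obstacle; the only computation needing a moment of care is $\rho_m^2(X;Y)=1-\delta$, together with the observation that — unlike the BSC case — this value is independent of $p$, so the gap between the upper and lower bounds on $\sM_\eps(X;Y)$ closes precisely when $\delta=0$ or $p=\frac12$. Everything else is a routine specialization of Theorem~\ref{Lemma_BISO}, Proposition~\ref{preposition_P_E}, Corollary~\ref{Corollary_UpperBound_M_eps} and Remark~\ref{remark_Erasure_Optimal_Corollary_2}.
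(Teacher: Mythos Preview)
Your proposal is correct and follows essentially the same route as the paper: compute $\E[X\mid Y=1]=1-\delta$ and $\rho_m^2(X;Y)=1-\delta$, plug into Theorem~\ref{Lemma_BISO} and Proposition~\ref{preposition_P_E} for the bounds on $\sM_\eps$ and $\sP$, and then note that at $p=\tfrac12$ the value $\var(X)=1-\delta$ makes the lower bound meet the upper bound from Corollary~\ref{Corollary_UpperBound_M_eps}, so Remark~\ref{remark_Erasure_Optimal_Corollary_2} identifies $\text{BEC}(\tilde\delta)$ as optimal. Your treatment is in fact more explicit than the paper's, which simply asserts $\rho_m^2(X;Y)=1-\delta$ without the argument via \eqref{Maximal_correlation_Equivalent} that you sketch.
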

\begin{proof}
  Since $\X=\{-1, 0, +1\}$, it is easy to show that $\E[X|Y=1]=1-\delta$, and $4(1-\delta)^2\var(Y)=\var(X)-\delta(1-\delta)$, and $\rho_m^2(X;Y)=1-\delta$. When $p=0.5$, then $\var(X)=1-\delta$. Here, again, we see that for uniform $Y$, $\sM_\eps(X;Y)$ achieves the bound given in Corollary~\ref{Corollary_UpperBound_M_eps} and hence again, according to Remark~\ref{remark_Erasure_Optimal_Corollary_2}, the erasure channel is an optimal privacy filter.
\end{proof}

%Let $Y\sim \sBer(p)$, $0\leq p\leq 1$ and the channel from $Y$ to $X\in\{-1, 0, 1\}$ be a binary erasure channel with erasure probability $0\leq \delta<1$; $P_{Y|X}=\text{BEC}(\delta)$.
\begin{figure}[t]
\centering
\begin{tikzpicture}
%\tikzstyle{every node} = [circle, fill = gray!30]
\node (a) [circle] at (0,0) {1};
\node (b) [circle] at (0,1.4) {-1};
\node (e1) [circle] at (0,0.7) {$0$};
\node (c) [circle] at (2.1,0) {1};
\node (d) [circle] at (2.1,1.4) {0};
\draw[->] (1.85,0) -- (0.15,0) node[pos=.5,sloped,below] {\footnotesize{$1-\delta$}};
\draw[->] (1.85,1.4) --(0.15,0.7) node[pos=.75,sloped,below] {};
%\draw[->] (1.85,1.4) --(0.15,0) node[pos=.75,sloped,below] {};
\draw[<-] (0.15,0.7) -- (1.85,0) node[pos=.25,sloped,above] {};
\draw[<-] (0.15,1.4) -- (1.85,1.4) node[pos=.5,sloped,above] {\footnotesize{$1-\delta$}};
\node (e) [circle] at (2.3,0) {};
\node (f) [circle] at (2.3,1.4) {};
\node (g) [circle] at (4.2,0) {1};
\node (h) [circle] at (4.2,1.4) {-1};
\node (k) [circle] at  (4.2, 0.7) {0};
\draw[->] (2.3,0) -- (4,0) node[pos=.5,sloped,below] {\footnotesize{$1-\tilde{\delta}$}};
\draw[->] (2.3,0) -- (4, 0.7) node[pos=.5,sloped,below] {};
\draw[->] (2.3,1.4) -- (4,1.4) node[pos=.5,sloped,above] {\footnotesize{$1-\tilde{\delta}$}};
\draw[->] (2.3,1.4) -- (4, 0.7) node[pos=.5,sloped,above] {};
\end{tikzpicture}
\caption{\small{Optimal privacy filter where $P_{X|Y}=BEC(\delta)$ with $Y~\sim~\sBer(\frac{1}{2})$ where $\tilde{\delta}$ is specified in \eqref{Delta_Tilde_BEC}}.} \label{fig:BEC}
\end{figure}

We conclude this section by connecting the above results to the \emph{initial efficiency}.
For BISO channels, we define the initial efficiency\footnote{Initial efficiency was previously defined for the common randomness problem in \cite{Lei_Zhai_PhD_Thesis}, for secret key generation in \cite{Key-Generation_product_sources}, for incremental
growth rate in a stock market \cite{erkip}, for source coding problems with side information in \cite{Gohari-additivity}, and for information extraction under privacy constraint in \cite{Asoode_submitted}.} of  $f_\eps(X;Y):=\var(Y)-\var(Y)\sW_{\eps}(X;Y)$ with respect to $\eps$ as  the derivative $f'_{0}(X;Y)$ of $\eps\mapsto f_{\eps}(X;Y)$ at $\eps=0$. In fact, $f'_{0}(X;Y)$ quantifies the decrease of $\mmse(Y|Z)$ when $\eps$ slightly increases from $0$. Then since for any BISO $P_{X|Y}$,  $f_0(X;Y)=0$, using Corollary~\ref{corollary_M_over_eps_Decreasing} and the convexity of $\eps\mapsto\sW_{\eps}(X;Y)$, we can write
\begin{eqnarray*}
% \nonumber to remove numbering (before each equation)
  f'_0(X;Y) &=& \lim_{\eps\downarrow 0}\frac{f_{\eps}(X;Y)}{\eps}=\sup_{\eps> 0}\frac{f_{\eps}(X;Y)}{\eps} \\
   &=& \var(X)\max_{P_{Z|Y}:\atop X\markov Y\markov Z}\frac{\var(Y)-\mmse(Y|Z)}{\var(X)-\mmse(X|Z)}.
\end{eqnarray*}
We can, therefore, conclude from Theorem~\ref{Lemma_BISO} that for a given pair of random variables $(X,Y)$ with BISO $P_{X|Y}$, we have
$$\max_{P_{Z|Y}:\atop X\markov Y\markov Z}\frac{\var(Y)-\mmse(Y|Z)}{\var(X)-\mmse(X|Z)}=\frac{1}{4\E^2[X|Y=1]}.$$

\subsection{$\sM_\eps(X;Y)$ and $\sW_{\eps}(X;Y)$ with $n$ {i.i.d.} observations}
Let $(X^n, Y^n)$ be $n$ i.i.d. copies of $(X,Y)$ with a given distribution $P_{XY}$. Similar to \eqref{Def:M_Eps} and \eqref{Def:W_Eps}, we can define
$$\sM_{\eps}(X^n;Y^n):=1-\frac{1}{n}\sup_{Z\in\Gamma_{\eps}^{\otimes n}}\sum_{i=1}^n \eta^2_{Z^n}(Y_i)$$
%\frac{\mmse(Y_i|Z^{n})}{\var(Y)},$$
and
$$\sW_{\eps}(X^n;Y^n):=1-\frac{1}{n}\sup_{Z\in\partial \Gamma_{\eps}^{\otimes n}}\sum_{i=1}^n\eta^2_{Z^n}(Y_i)$$
%\frac{\mmse(Y_i|Z^{n})}{\var(Y)},$$
where $Z^n:=(Z_1, \dots, Z_n)$, and $$\Gamma_{\eps}^{\otimes n}:=\{P_{Z^n|Y^n}: \rho_m^2(X^n; Z^n)\leq \eps\},$$  and $$\partial\Gamma_{\eps}^{\otimes n}:=\{P_{Z^n|Y^n}: \sum_{i=1}^n \eta^2_{Z^n}(X_i)\leq n\eps\}.$$

Using a technique developed in \cite{Calmon_fundamental-Limit}, we can directly show that $\sM_0(X;Y)<1$ if and only if the smallest singular value, $\sigma_{\mathsf{min}}$ , of the operator  $f(X)\mapsto \E[f(X)|Y]$ is zero. Now if we consider the operator $f(X^n)\mapsto \E[f(X^n)|Y^n]$  for i.i.d.\ $(X^n, Y^n)$, we can see that the smallest singular value is $\sigma_{\mathsf{min}}^n$ (see, e.g., \cite{Ulukus_Data_processing}, \cite{polyanskiy2012hypothesis}). It therefore follows that unless $\sigma_{\mathsf{min}}=1$, $\lim_{n\to \infty}\sM_0(X^n;Y^n)<1$ for any distribution $P_{XY}$. This can also be seen from the Slepian-Wolf theorem \cite[Theorem 15.4.1]{Cover_Book} and specifically \cite[Lemma 1]{Tandon_Quantize_bin}.
%This can also be seen from Slepian-Wolf theorem as well. Let $B_{SW}(y^n)$ denote the index of the bin $y^n$ falls in using the Slepian-Wolf coding scheme when $X^n$ is considered as side information. It can be shown from that $B_{SW}(Y^n)$ is asymptotically independent of $X^n$, that is,
%$$\lim_{n\to \infty}\frac{1}{n}I(B_{SW}(Y^n); X^n)=0.$$ Letting $Z_n=B_{SW}(Y^n)$, we can conclude that for $n$ sufficiently large, $X^n\indep Z_n$ while $Z_n$ depends deterministically on $Y^n$, and hence $\sM_{0}(X^n; Y^n)<1$.
The following result implies that the optimal privacy filter $P_{Z^n|Y^n}$ which achieves non-trivial perfect privacy  cannot be a memoryless channel.
%In the following proposition, we assume that $(X^n, Y^n)$ is i.i.d.\ and the privacy filter is a memoryless channel.
\begin{proposition}
Let $(X^n, Y^n)$ be an i.i.d. copies of $(X,Y)$ with distribution $P_{XY}$. If the family of feasible stochastic kernels in the optimization \eqref{Eq:PAMMSE_equivalent} is constrained to be of the form $P_{Z^n|Y^n}(z^n|y^n)=\prod_{i=1}^{n}P_{i}(z_i|y_i)$, then
$$\sM_{\eps}(X^n;Y^n)=\sM_{\eps}(X;Y),$$
$$\sW_{\eps}(X^n; Y^n)=\sW_{\eps}(X;Y).$$
\end{proposition}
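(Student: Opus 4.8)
The plan is to use the product structure of both the i.i.d.\ source and the memoryless filter to collapse the $n$-letter problems to their single-letter versions. Fix a filter of the form $P_{Z^n|Y^n}(z^n|y^n)=\prod_{i=1}^n P_i(z_i|y_i)$. Then the joint law factors as $P_{X^nY^nZ^n}=\prod_{i=1}^n P_{X_iY_iZ_i}$, so the triples $(X_i,Y_i,Z_i)$ are mutually independent; in particular $(Y_i,Z_i)$ and $(X_i,Z_i)$ are each independent of $(Z_j)_{j\neq i}$, which gives $\E[Y_i\mid Z^n]=\E[Y_i\mid Z_i]$ and $\E[X_i\mid Z^n]=\E[X_i\mid Z_i]$. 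Hence $\eta^2_{Z^n}(Y_i)=\eta^2_{Z_i}(Y_i)$ and $\eta^2_{Z^n}(X_i)=\eta^2_{Z_i}(X_i)$ for every $i$. This first step is routine but is the one where memorylessness of the filter is actually used, so it is worth writing out carefully.

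The next step is the tensorization of maximal correlation for independent pairs \cite{Witsenhausen:dependent}: since $(X_1,Z_1),\dots,(X_n,Z_n)$ are independent, $\rho_m^2(X^n;Z^n)=\max_{1\le i\le n}\rho_m^2(X_i;Z_i)$. Therefore, for memoryless filters, $P_{Z^n|Y^n}\in\Gamma_\eps^{\otimes n}$ is equivalent to $\rho_m^2(X_i;Z_i)\le\eps$ for all $i$. Now in $\sM_\eps(X^n;Y^n)=1-\frac1n\sup\sum_i\eta^2_{Z^n}(Y_i)$ both the objective $\frac1n\sum_i\eta^2_{Z_i}(Y_i)$ and the constraint set separate across coordinates, so the supremum is $\frac1n\sum_i\sup_{P_i:\rho_m^2(X_i;Z_i)\le\eps}\eta^2_{Z_i}(Y_i)$; each summand equals $1-\sM_\eps(X;Y)$ by the characterization \eqref{Eq:PAMMSE_equivalent} together with $(X_i,Y_i)\sim P_{XY}$, which yields $\sM_\eps(X^n;Y^n)=\sM_\eps(X;Y)$.

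For $\sW_\eps$ the first step again reduces the constraint defining $\partial\Gamma_\eps^{\otimes n}$ to $\frac1n\sum_i\eta^2_{Z_i}(X_i)\le\eps$, but this is an averaged rather than a per-coordinate budget, so I would treat it as a resource-allocation problem. Set $\delta_i:=\eta^2_{Z_i}(X_i)$ and let $\psi(\delta):=1-\sW_\delta(X;Y)$, which by the law of total variance equals $\sup_{P_{Z|Y}:\eta^2_Z(X)\le\delta}\eta^2_Z(Y)$. A routine approximation argument (not needing attainment of the single-letter suprema) shows the memoryless $n$-letter supremum equals $\sup\{\frac1n\sum_i\psi(\delta_i):\delta_i\ge0,\ \frac1n\sum_i\delta_i\le\eps\}$. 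Since $\psi$ is nondecreasing (its feasible set grows with $\delta$) and concave --- the latter because $\delta\mapsto\sW_\delta(X;Y)$ is convex by Lemma~\ref{lemma_Concavity_M_eps} --- Jensen's inequality bounds this by $\psi\big(\frac1n\sum_i\delta_i\big)\le\psi(\eps)$, and $\delta_i\equiv\eps$ attains it, so the supremum is $\psi(\eps)$ and $\sW_\eps(X^n;Y^n)=1-\psi(\eps)=\sW_\eps(X;Y)$.

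The only genuinely nontrivial ingredient is the tensorization identity $\rho_m^2(X^n;Z^n)=\max_i\rho_m^2(X_i;Z_i)$, which I would quote from \cite{Witsenhausen:dependent} rather than reprove; everything else is bookkeeping, with the one caveat that the weak case also relies on the convexity of $\eps\mapsto\sW_\eps(X;Y)$ from Lemma~\ref{lemma_Concavity_M_eps}. The main obstacle, such as it is, is thus not an obstacle but a matter of invoking the right external result and then being careful with the single-letterization $\eta^2_{Z^n}(\,\cdot\,)=\eta^2_{Z_i}(\,\cdot\,)$.
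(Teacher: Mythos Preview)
Your proposal is correct and follows essentially the same approach as the paper: both use the product structure to single-letterize $\eta^2_{Z^n}(Y_i)$ and $\eta^2_{Z^n}(X_i)$, invoke Witsenhausen's tensorization of maximal correlation to handle $\Gamma_\eps^{\otimes n}$, and rely on the convexity of $\eps\mapsto\sW_\eps(X;Y)$ (Lemma~\ref{lemma_Concavity_M_eps}) for the averaged constraint in $\partial\Gamma_\eps^{\otimes n}$. Your resource-allocation/Jensen framing for $\sW_\eps$ is a slightly more explicit packaging of the same inequality chain the paper writes, and your remark about not needing attainment of the single-letter suprema is a welcome bit of extra care.
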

\begin{proof}
It is clear that $\sM_{\eps}(X^n; Y^n)$ is at most as large as $\sM_{\eps}(X; Y)$, and therefore we will only show $\sM_{\eps}(X^n; Y^n)\geq \sM_{\eps}(X; Y)$ (similarly for $\sW_\eps(X;Y)$).
Let $\eps_i=\rho^2_m(X_i; Z_i)$ for $1\leq i\leq n$. From the tensorization property of maximal correlation \cite{Witsenhausen:dependent}, we know that $\rho_m(X^n; Z^n)=\max\{\rho_m(X_i; Z_i)\}$ and hence $P_{Z^n|Y^n}\in\Gamma_{\eps}^{\otimes n}$ if and only if $\eps_i\leq \eps$ for $1\leq i\leq n$. We can then write
\begin{eqnarray*}
% \nonumber to remove numbering (before each equation)
  1-\frac{1}{n}\sum_{i=1}^n\eta^2_{Z^n}(Y_i)&=&\frac{1}{n\var(Y)}\sum_{i=1}^n\mmse(Y_i|Z^n)\\
&=& \frac{1}{n\var(Y)}\sum_{i=1}^n\mmse(Y_i|Z_i)\\
   &\geq&\frac{1}{n}\sum_{i=1}^n\sM_{\eps_i}(X; Y)\\
&\geq& \sM_{\eps}(X;Y),
\end{eqnarray*}
where the last inequality is due to the fact that $\eps\mapsto \sM_{\eps}(X;Y)$ is decreasing. It therefore follows that
$$\sM_{\eps}(X^n;Y^n)\geq \sM_{\eps}(X;Y).$$
To prove the same result for $\sW_{\eps}(X^n; Y^n)$, let now $\eta^2_{Z_i}(X_i)\leq \eps_i$ or equivalently $\mmse(X_i|Z_i)\geq (1-\eps_i)\var(X)$ for $0\leq \eps_i\leq 1$ and $1\leq i\leq n$; hence $P_{Z^n|Y^n}\in\partial\Gamma_{\eps}^{\otimes n}$ if  $\sum_{i=1}^n\eps_i=n\eps$. We can write
\begin{eqnarray*}
% \nonumber to remove numbering (before each equation)
   1-\frac{1}{n}\sum_{i=1}^n\eta^2_{Z^n}(Y_i)&=&\frac{1}{n\var(Y)}\sum_{i=1}^n\mmse(Y_i|Z^n)\\&=& \frac{1}{n\var(Y)}\sum_{i=1}^n\mmse(Y_i|Z_i)\\
   &\geq&\frac{1}{n}\sum_{i=1}^n\sW_{\eps_i}(X; Y)\\&\geq& \sW_{\eps}(X;Y),
\end{eqnarray*}
where the last inequality is due to the convexity of $\eps\mapsto \sW_{\eps}(X;Y)$.
\end{proof}
\section{Continuous $(X,Y)$, Additive Gaussian Noise As Privacy Filter}
In this section, we assume $X$ and $Y$ are both absolutely continuous random variables and  the channel $P_{Z|Y}$ is modelled by a scaled additive stable\footnote{A random variable $X$ with distribution $P$ is called stable if for $X_1$, $X_2$ i.i.d.\ according to $P$, for any constants $a$, $b$,
the random variable $aX_1 + bX_2$ has the same distribution as $cX + d$ for some constants $c$ and
$d$ \cite[Chapter 1]{Stable_Nolan}.} noise variable  $N_f$ which is independent of $(X,Y)$ and has density $f$ with zero mean and unit variance, i.e.,
$$Z_{\gamma}=Y+\gamma N_{f},$$ for some $\gamma\geq 0$.  We then define
$$\sM_{\eps}^f(X;Y):=1-\sup_{\gamma\in \C_{\eps}(P_{XY})}\eta^2_{Z_{\gamma}}(Y),$$ and similarly
$$\sW_{\eps}^f(X;Y):=1-\sup_{\gamma\in \partial\C_{\eps}(P_{XY})}\eta^2_{Z_{\gamma}}(Y),$$
%$$\sW_{\eps}^f(X;Y):=\inf_{\gamma\in \partial\C_{\eps}(P_{XY})}\frac{\mmse(Y|Z_{\gamma})}{\var(Y)},$$
where $$\C_{\eps}(P_{XY}):=\{\gamma\geq 0:\rho_m^2(X; Z_{\gamma})\leq \eps\},$$ and $$\partial\C_{\eps}(P_{XY}):=\{\gamma\geq 0:\eta^2_{Z_{\gamma}}(X)\leq\eps\}.$$
If the noise process is Gaussian $N(0,1)$, we denote $N_f$, $\sM_\eps^f(X;Y)$, and $\sW_\eps^f(X;Y)$ by $N_{\sG}$, $\sM_\eps(X;Y)$, and $\sW_\eps(X;Y)$, respectively.

The bounds for $\sW_\eps(X;Y)$ obtained in \eqref{Bounds_UP_LB_Weak} clearly hold:
 $$0\leq\sW_\eps^f(X;Y)\leq\sM^f_\eps(X;Y)\leq 1-\eps,$$
 and, in particular, $\sM_0^f(X;Y)\leq 1$. In the following, we show that this last inequality is in fact an equality.
\begin{proposition}
For a given absolutely continuous $(X,Y)$, the map $\eps\mapsto \sM_\eps^f(X;Y)$ is non-negative, strictly decreasing and satisfies  $$\lim_{\eps\downarrow 0}\sM^f_{\eps}(X;Y)=1.$$
\end{proposition}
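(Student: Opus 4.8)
The plan is to reduce the whole statement to the monotonicity and limiting behaviour, as $\gamma$ ranges over $[0,\infty)$, of the two scalar functions $\gamma\mapsto\rho_m^2(X;Z_\gamma)$ and $\gamma\mapsto\eta^2_{Z_\gamma}(Y)=1-\mmse(Y|Z_\gamma)/\var(Y)$. Non-negativity is immediate from $\eta^2_{Z_\gamma}(Y)\le 1$. It is worth noting up front that the statement implicitly concerns a genuinely dependent pair: if $X\indep Y$ then $\rho_m(X;Z_\gamma)=0$ for every $\gamma$, $\C_\eps=[0,\infty)$ for all $\eps$, and $\sM^f_\eps\equiv 0$; so throughout I assume $\rho_m^2(X;Y)>0$.

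First I would establish monotonicity in $\gamma$. Since $N_f$ is stable, for $\gamma'>\gamma$ one can write $\gamma'N_f\stackrel{d}{=}\gamma N_f+cN_f'$ with $c>0$ and $N_f'$ an independent copy, and the corresponding coupling makes $X\markov Y\markov Z_\gamma\markov Z_{\gamma'}$ a genuine Markov chain, i.e.\ $Z_{\gamma'}$ is a degraded version of $Z_\gamma$. The data-processing inequality for $\rho_m$ and for the correlation ratio (both recalled earlier in the paper) then gives that $\gamma\mapsto\rho_m^2(X;Z_\gamma)$ and $\gamma\mapsto\eta^2_{Z_\gamma}(Y)$ are non-increasing; moreover, for absolutely continuous $Y$ of finite positive variance $\gamma\mapsto\mmse(Y|Z_\gamma)$ is \emph{strictly} increasing (a standard property of the Gaussian channel, e.g.\ via the I-MMSE relation), so $\gamma\mapsto\eta^2_{Z_\gamma}(Y)$ is strictly decreasing. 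Using the weak lower semicontinuity of $\rho_m$ noted in the introduction, $\C_\eps=\{\gamma\ge 0:\rho_m^2(X;Z_\gamma)\le\eps\}$ is closed, hence an interval $[\gamma^\ast_\eps,\infty)$ (nonempty for $\eps>0$ by the second limiting fact below), and since $\eta^2_{Z_\gamma}(Y)$ is non-increasing the supremum in the definition is attained at the left endpoint, so $\sM^f_\eps(X;Y)=1-\eta^2_{Z_{\gamma^\ast_\eps}}(Y)$ with $\gamma^\ast_\eps$ non-increasing in $\eps$. This already yields that $\eps\mapsto\sM^f_\eps$ is non-increasing.

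For the limit I would use two facts about $\gamma\mapsto\rho_m^2(X;Z_\gamma)$: (i) it is strictly positive at every finite $\gamma$ — if $X$ were independent of $Y+\gamma N_f$ then, comparing conditional characteristic functions and using that $N_f$ is independent of $(X,Y)$ and has a nowhere-vanishing characteristic function, $\E[e^{itY}\mid X]$ could not depend on $X$, forcing $X\indep Y$; and (ii) it tends to $0$ (and $\mmse(Y|Z_\gamma)\to\var(Y)$) as $\gamma\to\infty$ — the second is classical (an additive observation with exploding noise scale becomes uninformative, e.g.\ $\E[Y|Z_\gamma]\to\E[Y]$ in $L^2$), and the first reduces, via the sub-multiplicativity of maximal correlation along Markov chains $\rho_m(X;Z_\gamma)\le\rho_m(X;Y)\,\rho_m(Y;Z_\gamma)$, to $\rho_m(Y;Y+\gamma N_f)\to 0$, which is the contraction property of the additive Gaussian channel and can also be read off a telescoping product $\rho_m(Y;Z_{2^K\gamma_0})\le\rho_m(Y;Z_{\gamma_0})\prod_{k<K}\rho_m(Z_{2^k\gamma_0};Z_{2^{k+1}\gamma_0})$ whose factors are bounded by a constant $<1$ for all large $k$. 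Granting these, for $\eps>0$ pick $M=M(\eps)<\infty$ with $\rho_m^2(X;Z_M)>\eps$; then every $\gamma\le M$ lies outside $\C_\eps$, so $\gamma^\ast_\eps>M$ and $\sM^f_\eps(X;Y)=1-\eta^2_{Z_{\gamma^\ast_\eps}}(Y)\ge 1-\eta^2_{Z_M}(Y)$. Since $\rho_m^2(X;Z_\gamma)$ is positive everywhere and vanishes at infinity, $M(\eps)\to\infty$ as $\eps\downarrow 0$, while $\eta^2_{Z_M}(Y)=1-\mmse(Y|Z_M)/\var(Y)\to 0$ as $M\to\infty$; combined with $\sM^f_\eps\le 1$ this gives $\lim_{\eps\downarrow 0}\sM^f_\eps(X;Y)=1$. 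For strict monotonicity (on $\{\eps:\sM^f_\eps>0\}$; for $\eps\ge\rho_m^2(X;Y)$ one has $0\in\C_\eps$ and $\sM^f_\eps=0$) I would combine the strict monotonicity of $\gamma\mapsto\eta^2_{Z_\gamma}(Y)$ with the claim that $\gamma^\ast_\eps$ is strictly decreasing, which needs continuity of $\gamma\mapsto\rho_m^2(X;Z_\gamma)$.

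The step I expect to be the main obstacle is precisely this continuity of $\gamma\mapsto\rho_m^2(X;Z_\gamma)$ — equivalently, that the maximal correlation along the noise family has no downward jump. Weak lower semicontinuity of maximal correlation only delivers $\liminf_{\gamma'\to\gamma}\rho_m(X;Z_{\gamma'})\ge\rho_m(X;Z_\gamma)$, which is the wrong direction for several of the steps above, and the matching upper semicontinuity has to be extracted from the specific smoothing structure of the channel, via the sub-multiplicative bound $\rho_m(X;Z_\gamma)\le\rho_m(X;Z_{\gamma'})\,\rho_m(Z_{\gamma'};Z_\gamma)$ together with $\rho_m(Z_{\gamma'};Z_\gamma)\to 1$ as $\gamma'\to\gamma$. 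The same ingredient is what makes both ``$\sM^f_\eps$ strictly decreasing'' and ``$\rho_m^2(X;Z_\gamma)\to 0$'' go through cleanly; everything else is degradation/DPI bookkeeping plus the elementary fact that an additive-noise observation becomes uninformative as the noise scale grows.
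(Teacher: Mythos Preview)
The paper does not actually prove this proposition here; it simply defers to \cite[Theorem~6]{Asoode_submitted}. So there is nothing in the present paper to compare your argument against. On its own merits your skeleton is correct and is presumably close to what that reference does: reduce everything to the behaviour of the two scalar maps $\gamma\mapsto\rho_m^2(X;Z_\gamma)$ and $\gamma\mapsto\eta^2_{Z_\gamma}(Y)$, use the degradation coupling $Z_{\gamma'}\mapsto Z_\gamma$ and data processing for monotonicity, and use the nowhere-vanishing characteristic function of stable noise together with the vanishing of information at large $\gamma$ for the limit $\sM^f_\eps\to 1$. A small side remark: a stable law with finite variance is Gaussian, so $N_f=N_{\sG}$ throughout and your appeals to Gaussian-channel facts (I--MMSE, strict monotonicity of $\mmse$) are legitimate, but you should say this explicitly.

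There is, however, a real gap precisely where you flagged it, and your proposed fix does not close it. The sub-multiplicative bound $\rho_m(X;Z_\gamma)\le\rho_m(X;Z_{\gamma'})\,\rho_m(Z_{\gamma'};Z_\gamma)$ requires the Markov chain $X\markov Z_{\gamma'}\markov Z_\gamma$, i.e.\ $\gamma'<\gamma$; rearranging gives only a \emph{lower} bound $\rho_m(X;Z_{\gamma'})\ge \rho_m(X;Z_\gamma)/\rho_m(Z_{\gamma'};Z_\gamma)$. Letting $\gamma'\uparrow\gamma$ you recover $\liminf_{\gamma'\uparrow\gamma}\rho_m(X;Z_{\gamma'})\ge\rho_m(X;Z_\gamma)$, which is exactly the lower semicontinuity you already had. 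What you need for strict monotonicity of $\eps\mapsto\sM^f_\eps$ is \emph{left}-continuity of $\gamma\mapsto\rho_m^2(X;Z_\gamma)$, i.e.\ an \emph{upper} bound on $\rho_m(X;Z_{\gamma'})$ for $\gamma'<\gamma$ in terms of $\rho_m(X;Z_\gamma)$, and the degradation order goes the wrong way for that (there is no chain $X\markov Z_\gamma\markov Z_{\gamma'}$ when $\gamma'<\gamma$). A non-increasing lower-semicontinuous function is automatically right-continuous but can still have downward jumps from the left; if such a jump of size $>\eps_2-\eps_1$ sits at $\gamma^\ast$, then $\gamma^\ast_{\eps_1}=\gamma^\ast_{\eps_2}$ and your argument for strict decrease fails. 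Closing this requires genuine input from the Gaussian smoothing structure --- e.g.\ showing that for each fixed $f$ the map $\gamma\mapsto\mmse(f(X)\mid Y+\gamma N_{\sG})$ is continuous and then supplying a uniformity/compactness argument over $f$ to pass to the supremum --- rather than another application of data processing.
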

\begin{proof}
The proof is similar to the proof of \cite[Theorem 6]{Asoode_submitted} and is hence omitted.
\end{proof}
%Bryc et al.\ \cite{dembo1} showed that the map $\gamma\mapsto \rho_m^2(X; Y+\gamma N)$ is non-increasing and right continuous on $[0, \infty)$ for any $\alpha$-stable noise process. However, it is straightforward  to show that $\gamma\mapsto \rho_m^2(X; Y+\gamma N_{\sG})$ is continuous and strictly increasing which are due to convolution with the Gaussian density [Needs Proof]. This together with the monotonicity of $\gamma\mapsto \mmse(Y|Y+\gamma N_{\sG})$, implies that
%$$\sM_{\eps}(X;Y)=\var(Y)-\mmse(Y|Y+\gamma_{\eps}N_{\sG}),$$ where $\gamma_{\eps}$ is the unique solution of the $\rho_m^2(X;Y+\gamma N_{\sG})=\eps$.
\begin{example}\label{example_Gaussian}
Let $(X,Y)$ be jointly Gaussian with correlation coefficient $\rho$ and let $N_f=N_{\sG}$. Without loss of generality, we can assume  that $\E[X]=\E[Y]=0$. It is known \cite{Renyi-dependence-measure} that $\rho^2_m(X;Z_{\gamma})=\rho^2(X; Z_{\gamma})$ and hence $$\rho_m^2(X; Z_{\gamma})=\rho^2\frac{\var(Y)}{\var(Y)+\gamma^2},$$ which implies that $\gamma\mapsto \rho_m^2(X; Z_{\gamma})$ is strictly decreasing and hence $\rho_m^2(X; Z_{\gamma})=\eps$ for $0\leq\eps\leq\rho^2_m(X;Y)=\rho^2$ has a unique solution
$$\gamma_{\eps}^2:=\var(Y)\left(\frac{\rho^2}{\eps}-1\right)$$ and $Z_{\gamma}\in \Gamma_{\eps}$ for any $\gamma\geq \gamma_{\eps}$. On the other hand, $\mmse(Y|Z_{\gamma})=\var(Y)\frac{\gamma^2}{\var(Y)+\gamma^2}$ which shows that the map $\gamma\mapsto \mmse(Y|Z_{\gamma})$ is strictly increasing and hence
\begin{equation}\label{M_eps_gaussian}
  \sM_{\eps}(X;Y)=\frac{\mmse(Y|Z_{\gamma_{\eps}})}{\var(Y)}=1-\frac{\eps}{\rho^2}.
\end{equation}
It is easy to check that that $\eta^2_{Z_{\eps}}(X)=\rho_m^2(X;Z_{\eps})=\eps$
%$$\eps=\rho_m^2(X; Z_{\gamma_{\eps}})=1-\frac{\mmse(X|Z_{\gamma_{\eps}})}{\var(X)}.$$
This then implies that for the jointly Gaussian $(X, Y)$, $\C_\eps(P_{XY})=\partial\C_\eps(P_{XY})$, i.e.,   the $\eps$-strong estimation privacy \eqref{Def:strong_estimation_privacy} coincides with the $\eps$-weak estimation privacy when $Y$ is perturbed by  Gaussian noise. It then follows that for $0\leq \eps\leq \rho^2$
\begin{equation}\label{Eq:SM_SW_Gaussian}
  \sM_{\eps}(X;Y)=\sW_{\eps}(X;Y)=1-\frac{\eps}{\rho^2}.
\end{equation}
%$$\mmse(X|Z_{\gamma})=\sigma_X^2\left[1-\rho^2\frac{\var(Y)}{\var(Y)+\gamma^2}\right],$$
%which implies $\gamma\mapsto \mmse(X|Z_{\gamma})$ is strictly increasing and hence $\mmse(X|Z_{\gamma})=(1-\eps)\var(X)$ for $\eps\leq $ has the unique solution $\gamma_{\eps}$ and
\end{example}

This example suggests that the bound in Corollary~\ref{Corollary_UpperBound_M_eps} still holds for absolutely continuous $(X,Y)$ in this model. We prove this observation in the following lemma with the assumption that $N=N_{\sG}$.
\begin{lemma}\label{Lemma_Gaussian_Input_worst}
For a given absolutely continuous $(X,Y)$, we have for $0\leq \eps\leq \rho_m^2(X;Y)$ $$\sW_{\eps}(X;Y)\leq\sM_{\eps}(X;Y)\leq 1-\frac{\eps}{\rho_m^2(X;Y)}.$$
\end{lemma}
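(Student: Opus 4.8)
The bound for $\sW_\eps$ follows at once from the one for $\sM_\eps$: along the Markov chain $X\markov Y\markov Z_\gamma$ one has $\eta_{Z_\gamma}^2(X)\le\rho_m^2(X;Z_\gamma)$, so $\C_\eps(P_{XY})\subseteq\partial\C_\eps(P_{XY})$ and hence $\sW_\eps(X;Y)\le\sM_\eps(X;Y)$. The task is therefore to prove $\sM_\eps(X;Y)\le 1-\eps/\rho_m^2(X;Y)$, and, just as in Example~\ref{example_Gaussian}, the plan is to exhibit a single noise level $\gamma_\eps$ that lies in $\C_\eps(P_{XY})$ and at which the ENSR already meets this bound.

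Two one-parameter estimates set this up. From the (suboptimal) linear estimator of $Y$ given $Z_\gamma=Y+\gamma N_{\sG}$,
\[
\frac{\mmse(Y|Z_\gamma)}{\var(Y)}=1-\eta_{Z_\gamma}^2(Y)\ \le\ 1-\rho^2(Y;Z_\gamma)\ =\ \frac{\gamma^2}{\var(Y)+\gamma^2}.
\]
Also, $\gamma\mapsto\rho_m^2(X;Z_\gamma)$ is continuous and non-increasing on $[0,\infty)$ with $\lim_{\gamma\downarrow0}\rho_m^2(X;Z_\gamma)=\rho_m^2(X;Y)$ and $\lim_{\gamma\to\infty}\rho_m^2(X;Z_\gamma)=0$: monotonicity because $X\markov Z_\gamma\markov Z_{\gamma'}$ for $\gamma<\gamma'$ (a sum of independent Gaussians is Gaussian) together with the data-processing inequality for $\rho_m$; the limit at $0$ from that same inequality and the weak lower semicontinuity of $\rho_m$; and the limit at $\infty$, e.g., from $\rho_m^2(X;Z_\gamma)\le\chi^2\bigl(P_{XZ_\gamma}\,\|\,P_X\otimes P_{Z_\gamma}\bigr)$, which vanishes as $\gamma\to\infty$. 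Hence for each $0\le\eps\le\rho_m^2(X;Y)$ the intermediate value theorem provides $\gamma_\eps\ge0$ with $\rho_m^2(X;Z_{\gamma_\eps})=\eps$; then $\gamma_\eps\in\C_\eps(P_{XY})$ and
\[
\sM_\eps(X;Y)\ \le\ \frac{\mmse(Y|Z_{\gamma_\eps})}{\var(Y)}\ \le\ \frac{\gamma_\eps^2}{\var(Y)+\gamma_\eps^2}.
\]

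It remains to bound $\gamma_\eps$: the desired conclusion is exactly $\gamma_\eps^2\le\var(Y)\bigl(\rho_m^2(X;Y)/\eps-1\bigr)$, i.e.\ $\var(Y)/(\var(Y)+\gamma_\eps^2)\ge\eps/\rho_m^2(X;Y)$. Since $\eps=\rho_m^2(X;Z_{\gamma_\eps})$ and $\var(Y)/(\var(Y)+\gamma^2)=\rho^2(Y;Z_\gamma)$, this amounts to the inequality
\[
\rho_m^2(X;Z_\gamma)\ \le\ \rho_m^2(X;Y)\,\rho^2(Y;Z_\gamma)\qquad\text{for every }\gamma\ge0 .
\]
This is the crux, and the continuous analogue of the convexity used for Corollary~\ref{Corollary_UpperBound_M_eps}: it says that passing $Y$ through additive Gaussian noise degrades its maximal correlation with $X$ by at least the factor $\rho^2(Y;Z_\gamma)$ — the exact rate realized for jointly Gaussian $(X,Y)$ in Example~\ref{example_Gaussian}, so the Gaussian pair is the worst case. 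I would prove it by combining the multiplicative (tensorization-type) data-processing inequality $\rho_m^2(X;Z_\gamma)\le\rho_m^2(X;Y)\,\rho_m^2(Y;Z_\gamma)$ over $X\markov Y\markov Z_\gamma$ with a Gaussian-specific bound controlling $\rho_m^2(Y;Z_\gamma)$, in the spirit of the perturbation arguments in \cite{Asoode_submitted}; this last step, the only one genuinely using the Gaussianity of the noise, is where I expect the main difficulty to lie. Granting the key inequality, substituting $\gamma_\eps^2\le\var(Y)\bigl(\rho_m^2(X;Y)/\eps-1\bigr)$ into the preceding display gives $\sM_\eps(X;Y)\le 1-\eps/\rho_m^2(X;Y)$, as claimed.
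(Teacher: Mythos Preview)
Your reduction is clean up to the final display, but the ``key inequality''
\[
\rho_m^2(X;Z_\gamma)\ \le\ \rho_m^2(X;Y)\,\rho^2(Y;Z_\gamma)\qquad(\gamma\ge0)
\]
is false in general. Take $X=Y$: then $\rho_m^2(X;Y)=1$ and the inequality collapses to $\rho_m^2(Y;Z_\gamma)\le\rho^2(Y;Z_\gamma)$, the reverse of the universally valid bound $\rho_m^2\ge\rho^2$; for non-Gaussian $Y$ and Gaussian noise one typically has strict inequality $\rho_m^2(Y;Y+\gamma N_{\sG})>\rho^2(Y;Y+\gamma N_{\sG})$. Your proposed route---the strong DPI $\rho_m^2(X;Z_\gamma)\le\rho_m^2(X;Y)\,\rho_m^2(Y;Z_\gamma)$ followed by a ``Gaussian-specific bound controlling $\rho_m^2(Y;Z_\gamma)$''---would therefore have to yield $\rho_m^2(Y;Z_\gamma)\le\rho^2(Y;Z_\gamma)$, which cannot hold. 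The source of the trouble is the very first estimate: replacing $\mmse(Y|Z_\gamma)/\var(Y)$ by the linear-estimator value $1-\rho^2(Y;Z_\gamma)$ discards exactly the slack you need later, since for non-Gaussian $Y$ the true MMSE is strictly smaller.

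The paper's argument is shorter and never passes through $\rho^2(Y;Z_\gamma)$. It works entirely with $\rho_m^2(Y;Z_\gamma)$: the multiplicative data-processing inequality (with $\rho_m$ on both factors) gives $\B_\eps:=\{\gamma:\rho_m^2(Y;Z_\gamma)\le\eps/\rho_m^2(X;Y)\}\subseteq\C_\eps(P_{XY})$, and the infimum of $\mmse(Y|Z_\gamma)$ over $\B_\eps$ is then identified directly, citing \eqref{Inequality_MMSE_rho}. There is no attempt to pin down a single $\gamma_\eps$ with $\rho_m^2(X;Z_{\gamma_\eps})=\eps$, nor any appeal to the linear estimator.
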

\begin{proof}
It suffices to prove the upper bound as the lower bound follows immediately from \eqref{Bounds_UP_LB_Weak}. Let $\B_{\eps}(P_{XY}):=\{\gamma\geq 0: \rho_m^2(Y;Z_{\gamma})\leq \frac{\eps}{\rho_m^2(X;Y)}\}$. The strong data processing inequality for maximal correlation \cite[Lemma 4]{Asoode_submitted} states that $\rho_m^2(X; Z_{\gamma})\leq \rho_m^2(X;Y)\rho_m^2(Y; Z_{\gamma})$ and therefore
implies $\B_{\eps}(P_{XY})\subseteq \C_{\eps}(P_{XY})$. Therefore
\begin{eqnarray}
% \nonumber % Remove numbering (before each equation)
  \inf_{\gamma\in \C_{\eps}(P_{XY})}\mmse(Y|Z_{\gamma}) &\leq& \inf_{\gamma\in \B_{\eps}(P_{XY})}\mmse(Y|Z_{\gamma}) \nonumber\\
   &=& \var(Y)\left(1-\frac{\eps}{\rho_m^2(X;Y)}\right)\nonumber,
\end{eqnarray}
where the equality follows form \eqref{Inequality_MMSE_rho}.
% \begin{eqnarray*}
% % \nonumber to remove numbering (before each equation)
%  \mmse(Y|Z_{\gamma})&=& \E[(Y-\E[Y|Z_{\gamma}])^2]=\E[Y^2]-\E[(\E[Y|Z_{\gamma}])^2]\\
%    &=& \var(Y)(1-\rho^2(Y;\E[Y|Z_{\gamma}])) \\
%    &\stackrel{(a)}{\geq} & \var(Y)(1-\rho_m^2(Y;Z_{\gamma})),
% \end{eqnarray*}
%where $(b)$ is due to the definition of the maximal correlation. The inequality in $(a)$ is tight for Gaussian $Y$ which the equality in \eqref{Proof_gaussian_worst} from which the result follows immediately.
\end{proof}
Combined with \eqref{Eq:SM_SW_Gaussian}, this lemma also shows that among all $(X,Y)$ with identical maximal correlation, the jointly Gaussian $(X_{\sG},Y_{\sG})$ yields the largest $\sM_{\eps}(X;Y)$ when the noise process is Gaussian. This observation is similar to \cite[Theorem 12]{MMSE_WU_Properties} which states that for Gaussian noise, the Gaussian input is the worst  with no privacy constraint imposed, i.e., $\mmse(Y|Y+N_{\sG})\leq \mmse(Y_{\sG}|Y_{\sG}+N_{\sG})$ where $Y_{\sG}$ has the same variance as $Y$. Conversely, Wu et al.\ \cite{MMSE_WU_Properties} also showed that for Gaussian input $Y$, additive Gaussian noise is the worst, i.e., $\mmse(Y_{\sG}|Y_{\sG}+N)\leq \mmse(Y_{\sG}|Y_{\sG}+N_{\sG})$ where $N_{\sG}$ is Gaussian having the same variance as $N$. These dual results are essentially the same by switching $Y$ to $N$ because $\mmse(Y|Y+N)=\mmse(N|Y+N)$. However, in our context, the noise variance is the parameter of optimization, and hence the dual of Lemma~\ref{Lemma_Gaussian_Input_worst} is not clear.

We can also obtain a lower bound on $\sM_\eps(X;Y)$ when only $Y$ is Gaussian.
%%%%%%%%%%******************************************************
%                        THINK ON THIS

%%*************************************************************
%The following lemma shows the same result in our context.
%\textcolor{red}{\begin{lemma}\label{Lemma_Gaussian_noise_worst}
%Let $X$ be jointly distributed with Gaussian $Y_{\sG}$. Then,
%% $Z_{\gamma}^f:=Y_{\sG}+\gamma N_f$, where $N_f$ is a noise process with density $f$. Define $$\sM_{\eps}^f(X;Y_{\sG}):=\sup_{\gamma\in \C(P_{XY_{\sG}})}[\var(Y_{\sG})-\mmse(Y_{\sG}|Y_{\sG}+Z_\gamma^f)].$$
%%Then,
%$$\sM_{\eps}^f(X;Y_{\sG})\leq 1-\frac{\eps}{\rho^2_m(X;Y_{\sG})}.$$
%\end{lemma}
%\begin{proof}
% %Note that $\mmse(Y_{\sG}|Y_{\sG}+\gamma N_f)=\mmse(\gamma N_f|Y_{\sG}+\gamma N_f)$. Hence, we can write:
% We can write:
% \begin{eqnarray*}
% % \nonumber % Remove numbering (before each equation)
%    \mmse(Y_{\sG}|Y_{\sG}+\gamma N_f) &=& \mmse(\gamma N_f|Y_{\sG}+\gamma N_f) \\
%    &\leq& \mmse(\gamma N_{\sG}|Y_{\sG}+\gamma N_{\sG})\\
%    &=&\frac{\gamma^2\var(Y)}{\gamma^2+\var(Y)},
% \end{eqnarray*} and hence
% \begin{eqnarray*}
% % \nonumber % Remove numbering (before each equation)
%   \sM_{\eps}^f(X;Y_{\sG})&\leq& \inf_{\gamma\in \C_{\eps}(P_{XY_{\sG}})}\frac{\gamma^2}{\gamma^2+\var(Y)} \\
%   &=&
% \end{eqnarray*}
%\end{proof}
%}
%
%This lemma, among other things, shows that for jointly Gaussian $(X_{\sG}, Y_{\sG})$, the Gaussian noise process yields the smallest $\sM_{\eps}(X;Y)$.
%******************************************************

%******************************************************

\begin{lemma}\label{Lemma_Gaussian_Y_Arbit_X}
Let $X$ be jointly distributed with Gaussian $Y_{\sG}$. Then,
$$1-\frac{\eps}{\rho^2(X; Y_{\sG})}\leq\sM_{\eps}(X; Y_{\sG})\leq 1-\frac{\eps}{\rho_m^2(X; Y_{\sG})},$$
\end{lemma}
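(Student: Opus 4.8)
The upper bound requires no new argument: both $X$ and $Y_{\sG}$ are absolutely continuous here, so Lemma~\ref{Lemma_Gaussian_Input_worst} applies with $Y=Y_{\sG}$ and gives $\sM_{\eps}(X;Y_{\sG})\leq 1-\eps/\rho_m^2(X;Y_{\sG})$ at once. The content is therefore the matching lower bound. The plan is to bound the correlation ratio $\eta^2_{Z_{\gamma}}(Y_{\sG})$ from above \emph{uniformly} over all admissible noise levels $\gamma\in\C_{\eps}(P_{XY_{\sG}})$, and then feed this into the identity $\sM_{\eps}(X;Y_{\sG})=1-\sup_{\gamma\in\C_{\eps}(P_{XY_{\sG}})}\eta^2_{Z_{\gamma}}(Y_{\sG})$.

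The structural fact that does the work is that $(Y_{\sG},Z_{\gamma})$ is \emph{jointly} Gaussian, since $Z_{\gamma}=Y_{\sG}+\gamma N_{\sG}$ with $N_{\sG}$ Gaussian and independent of $Y_{\sG}$. Hence $\E[Y_{\sG}|Z_{\gamma}]$ is affine in $Z_{\gamma}$, which collapses the correlation ratio onto the Pearson correlation coefficient; using $\var(N_{\sG})=1$ I would record
$$\eta^2_{Z_{\gamma}}(Y_{\sG})=\rho^2(Y_{\sG};Z_{\gamma})=\frac{\var(Y_{\sG})}{\var(Y_{\sG})+\gamma^2}.$$
Next, since $N_{\sG}$ is independent of $X$, centering $X$ and $Y_{\sG}$ gives $\E[XZ_{\gamma}]=\E[XY_{\sG}]$, from which
$$\rho^2(X;Z_{\gamma})=\frac{\var(Y_{\sG})}{\var(Y_{\sG})+\gamma^2}\,\rho^2(X;Y_{\sG})=\rho^2(X;Y_{\sG})\,\eta^2_{Z_{\gamma}}(Y_{\sG}).$$

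Finally, combining the admissibility constraint with the elementary bound $\rho^2\leq\rho_m^2$: for every $\gamma\in\C_{\eps}(P_{XY_{\sG}})$,
$$\eps\geq\rho_m^2(X;Z_{\gamma})\geq\rho^2(X;Z_{\gamma})=\rho^2(X;Y_{\sG})\,\eta^2_{Z_{\gamma}}(Y_{\sG}),$$
so $\eta^2_{Z_{\gamma}}(Y_{\sG})\leq\eps/\rho^2(X;Y_{\sG})$; taking the supremum over admissible $\gamma$ and substituting into the definition of $\sM_{\eps}$ yields $\sM_{\eps}(X;Y_{\sG})\geq 1-\eps/\rho^2(X;Y_{\sG})$. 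I do not anticipate a real obstacle: the argument is essentially bookkeeping once the joint Gaussianity of $(Y_{\sG},Z_{\gamma})$ is spotted, since it is precisely this that forces $\eta^2_{Z_{\gamma}}(Y_{\sG})$, $\rho^2(Y_{\sG};Z_{\gamma})$ and the variance ratio to coincide. The only points needing care are keeping the single inequality $\rho^2(X;Z_{\gamma})\leq\rho_m^2(X;Z_{\gamma})$ oriented correctly, the tacit nondegeneracy $\rho(X;Y_{\sG})\neq 0$, and noting that when $\eps>\rho^2(X;Y_{\sG})$ the asserted lower bound is negative and holds trivially since $\sM_{\eps}\geq 0$.
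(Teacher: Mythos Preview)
Your proposal is correct and follows essentially the same route as the paper's proof: both use $\rho_m^2(X;Z_\gamma)\geq\rho^2(X;Z_\gamma)$, compute $\rho^2(X;Z_\gamma)=\rho^2(X;Y_{\sG})\cdot\frac{\var(Y_{\sG})}{\var(Y_{\sG})+\gamma^2}$ via the independence of $N_{\sG}$ from $X$, and exploit the joint Gaussianity of $(Y_{\sG},Z_\gamma)$ to identify $\eta^2_{Z_\gamma}(Y_{\sG})$ (equivalently $\mmse(Y_{\sG}|Z_\gamma)/\var(Y_{\sG})$) with this same variance ratio. The only cosmetic difference is that you phrase the conclusion via $\sM_\eps=1-\sup_\gamma\eta^2_{Z_\gamma}(Y_{\sG})$ while the paper phrases it via $\inf_\gamma\mmse(Y_{\sG}|Z_\gamma)$; these are equivalent by the law of total variance.
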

\begin{proof}
  First note that
  \begin{eqnarray*}
  % \nonumber % Remove numbering (before each equation)
    \rho_m^2(X; Y_{\sG}+\gamma N_{\sG}) &\geq&\rho^2(X; Y_{\sG}+\gamma N_{\sG})\\
     &=&\rho^2(X; Y_{\sG})\rho^2(Y_{\sG}; Y_{\sG}+\gamma N_{\sG})\\
     &=& \rho^2(X; Y_{\sG})\frac{\var(Y_{\sG})}{\var(Y_{\sG})+\gamma^2}\\
     &=:& \zeta(X; Y_{\sG}).
%     &=&\rho_m^2(X_{\sG}; Y_{\sG})\rho^2(Y_{\sG}; Y_{\sG}+\gamma N_{\sG})\\
%     &=&\rho_m^2(X_{\sG}; Y_{\sG}+\gamma N_{\sG}),
  \end{eqnarray*}
 Therefore we have
  \begin{equation*}
  % \nonumber % Remove numbering (before each equation)
    \inf_{\gamma\in\C_{\eps}(X; Y_{\sG})} \mmse(Y_{\sG}|Y_{\sG}+\gamma N_{\sG}) \geq  \inf_{\zeta(X; Y_{\sG})\geq \eps}\frac{\gamma^2\var(Y)}{\var(Y)+\gamma^2},
  \end{equation*}
and hence
$$\sM_{\eps}(X; Y_{\sG})\geq 1-\frac{\eps}{\rho^2(X; Y_{\sG})}.$$
\end{proof}
This lemma, together with Example~\ref{example_Gaussian}, implies that
\begin{eqnarray*}
% \nonumber % Remove numbering (before each equation)
  &&\sM_{\eps}(X_{\sG}, Y_{\sG})-\sM_{\eps}(X;Y_{\sG})\\
 &&\qquad \qquad\qquad \qquad\leq \eps\left[\frac{1}{\rho^2(X; Y_{\sG})}-\frac{1}{\rho^2_m(X; Y_{\sG})}\right]
\end{eqnarray*} for Gaussian $X_{\sG}$ which satisfies $\rho_m^2(X_{\sG}; Y_{\sG})=\rho_m^2(X; Y_{\sG})$. Assume that  the difference $\rho_m^2(X; Y_{\sG})-\rho^2(X; Y_{\sG})$ is small.  Note that this does not necessarily mean that the distribution of $X$ is close to Gaussian. Nevertheless, this lemma illustrates that $\sM_{\eps}(X;Y_{\sG})$ is very close to $\sM_{\eps}(X_{\sG};Y_{\sG})$.

%%%%%%%%%%%%%%%%LONG EQUATION
\newcounter{tempequationcounter}
\begin{figure*}[!t]
\normalsize
\setcounter{tempequationcounter}{\value{equation}}
\begin{IEEEeqnarray}{rCl}
\setcounter{equation}{23}
&&\hspace{-0.35cm}\mmse(X|Z)=\E[\var(X|Z)]\nonumber\\
  &=& \sum_{z\in\Z}P_Z(z)\var(X|Z=z)= \sum_{z\in\Z}P_Z(z)\var(P_{X|Z}(\cdot|z))\nonumber\\
   &\stackrel{(a)}{=}&  \sum_{z\in\Z}P_Z(z)\left[\sum_{i=1}^ki^2[P_{X|Z}(i|z)+P_{X|Z}(-i|z)]-\left(\sum_{i=1}^ki[P_{X|Z}(i|z)-P_{X|Z}(-i|z)]\right)^2\right]\nonumber\\
   &\stackrel{(b)}{=}&  \sum_{z\in\Z}P_Z(z)\left[\sum_{i=1}^ki^2[P_{X|Y}(i|1)+P_{X|Y}(-i|1)]-\left(\sum_{i=1}^ki[P_{X|Y}(i|1)-P_{X|Y}(-i|1)][P_{Y|Z}(1|z)-P_{Y|Z}(0|z)]\right)^2\right]\nonumber\\
   &=&\sum_{z\in\Z}P_Z(z)\left[\sum_{i=1}^ki^2[P_{X|Y}(i|1)+P_{X|Y}(-i|1)]-[P_{Y|Z}(1|z)-P_{Y|Z}(0|z)]^2\left(\sum_{i=1}^ki[P_{X|Y}(i|1)-P_{X|Y}(-i|1)]\right)^2\right]\nonumber\\
&=&\sum_{z\in\Z}P_Z(z)\left[\sum_{i=1}^ki^2[P_{X|Y}(i|1)+P_{X|Y}(-i|1)]-[2\var^{-1}_b(\var_b(Y|Z=z))-1]^2\left(\sum_{i=1}^ki[P_{X|Y}(i|1)-P_{X|Y}(-i|1)]\right)^2\right]\nonumber\\
&\stackrel{(c)}{=}&\sum_{z\in\Z}P_Z(z)\left[\sum_{i=1}^ki^2[P_{X|Y}(i|1)+P_{X|Y}(-i|1)]-[1-4\var_b(Y|Z=z)]\left(\sum_{i=1}^ki[P_{X|Y}(i|1)-P_{X|Y}(-i|1)]\right)^2\right]\nonumber\\
&=&\sum_{i=1}^ki^2[P_{X|Y}(i|1)+P_{X|Y}(-i|1)]-\left(\sum_{i=1}^ki[P_{X|Y}(i|1)-P_{X|Y}(-i|1)]\right)^2\left(1-4\E[\var(Y|Z)]\right),
\label{eq:floatingequation_BISO}
\end{IEEEeqnarray}
\setcounter{equation}{\value{tempequationcounter}}
%\vspace*{4pt}
\hrulefill
\end{figure*}

%%%%%*******************************

%%%%%%%%%%%%%%%%LONG EQUATION
\begin{figure*}[t]
\normalsize
\setcounter{tempequationcounter}{\value{equation}}
\begin{IEEEeqnarray}{rCl}
\setcounter{equation}{24}
% \nonumber % Remove numbering (before each equation)
   \mmse(Y|Z)&=& \frac{\left(\sum_{i=1}^ki[P_{X|Y}(i|1)-P_{X|Y}(-i|1)]\right)^2-\sum_{i=1}^ki^2[P_{X|Y}(i|1)+P_{X|Y}(-i|1)]+\mmse(X|Z)}{4\left(\sum_{i=1}^ki[P_{X|Y}(i|1)-P_{X|Y}(-i|1)]\right)^2}\nonumber\\
  &=& \frac{\mmse(X|Z)-\var(X|Y=1)}{4\left(\sum_{i=1}^ki[P_{X|Y}(i|1)-P_{X|Y}(-i|1)]\right)^2}=\frac{\mmse(X|Z)-\var(X|Y=1)}{4\E^2[X|Y=1]}. \label{eq:floatingequation_BISO_2}
\end{IEEEeqnarray}
\hrulefill
\end{figure*}

\appendices
\section{Proof of Theorem~\ref{Lemma_BISO}}\label{Appebdix_Lemma_BISO}
For $Y\sim\sBer(p)$, we have $\var_b(Y):=\var(P_Y)=p(1-p)$ and let $\var_b^{-1}:[0, \frac{1}{4}]\to [0, \frac{1}{2}]$ be its inverse function. Due to the Markovity condition $X\markov Y\markov Z$, we can write
\begin{equation}\label{markovity}
  P_{X|Z}(x|z)=P_{X|Y}(x|1)P_{Y|Z}(1|z)+P_{X|Y}(x|0)P_{Y|Z}(0|z).
\end{equation}
Note that for $X$ supported over $\X=\{\pm 1, \pm 2, \dots, \pm k\}$, the variance can be written as
\begin{equation}\label{variance_BISO}
  \var(X)=\sum_{i=1}^ki^2[P_X(i)+P_X(-i)]-\left[\sum_{i=1}^ki[P_X(i)-P_X(-i)]\right]^2.
\end{equation}

We can expand $\mmse(X|Z)$ as in \eqref{eq:floatingequation_BISO} where $(a)$ is a simple application of \eqref{variance_BISO}, $(b)$ follows from the Markovity condition \eqref{markovity} and the definition of BISO, and in $(c)$ we used the fact that $\var^{-1}_b(u)=\frac{1}{2}(1-\sqrt{1-4u})$ for any $0\leq u\leq \frac{1}{4}$.
\addtocounter{equation}{-1}

We can therefore write $\mmse(Y|Z)$ \emph{linearly} in terms of $\mmse(X|Z)$ as in  \eqref{eq:floatingequation_BISO_2}. Note that since for $Z\in \partial\Gamma_{\eps}$, $\mmse(X|Z)\geq (1-\eps)\var(X)$, we can  write
\addtocounter{equation}{1}
 \begin{eqnarray}
% \nonumber to remove numbering (before each equation)
  \sW_{\eps}(X;Y) &=& \frac{(1-\eps)\var(X)-\var(X|Y=1)}{4(\E[X|Y=1])^2} \nonumber\\
   &=&  \frac{\var(X)-\var(X|Y=1)}{4\var(Y)\E^2[X|Y=1]}\nonumber\\
   &&-\frac{\eps\var(X)}{4\var(Y)\E^2[X|Y=1]}\label{variance_BISO_2}.
\end{eqnarray}

Note that, we have
\begin{eqnarray*}
% \nonumber to remove numbering (before each equation)
   p\var(X|Y=1)&+&(1-p)\var(X|Y=0)= \E[\var(X|Y)] \\
   &=& \var(X)-\var(\E[X|Y]),
\end{eqnarray*}
and consequently,
%\addtocounter{equation}{1}
\begin{eqnarray}\label{variance_BISO_3}
  &&\var(X)-\var(X|Y=1)=\var(\E[X|Y])\nonumber\\
  &&\qquad\qquad+(1-p)[\var(X|Y=0)-\var(X|Y=1)]\nonumber\\
  &&~~~~~\qquad\qquad\qquad\qquad\stackrel{(a)}{=}\var(\E[X|Y])
\end{eqnarray}
where $(a)$ follows from the symmetry of the channel $P_{X|Y}$. Note that $\E[X|Y]$ is a binary random variable which is equal to $\E[X|Y=1]$  with probability $p$ and $\E[X|Y=0]$  with probability $1-p$. Due to the symmetry of the channel, one can easily show that $\E[X|Y=0]=-\E[X|Y=1]$. It then follows that \begin{eqnarray}
% \nonumber to remove numbering (before each equation)
  \var(\E[X|Y])&=&  p(\E[X|Y=1])^2+(1-p)(\E[X|Y=0])^2\nonumber\\
  &&-\left[p\E[X|Y=1]+(1-p)\E[X|Y=0]\right]^2\nonumber\\
   &=& (\E[X|Y=1])^2-(\E[X|Y=1])^2(2p-1)^2\nonumber\\
&=&4p(1-p) (\E[X|Y=1])^2\nonumber\\
&=&4\var(Y)(\E[X|Y=1])^2\label{variance_BISO_4}
\end{eqnarray}
Plugging \eqref{variance_BISO_3} and \eqref{variance_BISO_4} into \eqref{variance_BISO_2}, we can conclude that
\begin{equation}\label{W_eps_BISO}
\sW_{\eps}(X;Y)=1-\frac{\eps\var(X)}{4\var(Y)\E^2[X|Y=1]}.
\end{equation}
The bound for $\sM_\eps(X;Y)$ simple follows from \eqref{W_eps_BISO} and Corollary~\ref{Corollary_UpperBound_M_eps}.

\bibliographystyle{plain}
\bibliography{bibliography}

\end{document}